\tikzset{
vertex/.style={circle,draw,inner sep=1pt,fill=Black}
}
\title{The Descriptive Complexity of Subgraph Isomorphism without Numerics}
\author{Oleg Verbitsky%
 \thanks{Institut f\"ur Informatik,
  Humboldt-Universit\"at zu Berlin, Unter den Linden 6, D-10099 Berlin.
  Supported by DFG grant VE 652/1--2.
  On leave from the IAPMM, Lviv, Ukraine.}\and
Maksim Zhukovskii%
 \thanks{Moscow Institute of Physics and Technology, Laboratory of Advanced Combinatorics and Network Applications, Moscow.
 Supported by grants No.\ 15-01-03530 and 16-31-60052 of Russian Foundation for Basic Research.}}
\date{}
\theoremstyle{plain}
\newtheorem{theorem}{Theorem}[section]
\newtheorem{lemma}[theorem]{Lemma}
\newtheorem{proposition}[theorem]{Proposition}
\newtheorem{corollary}[theorem]{Corollary}
\theoremstyle{definition} 
\newcommand{\Case}[2]{\smallskip\par{\it Case #1:\/ #2}}
\newcounter{oq}
\newcommand{\que}{\refstepcounter{oq}\par{\sc \theoq.}~}
\newcommand{\refeq}[1]{(\ref{eq:#1})}
\newcommand{\feq}{\stackrel{\mbox{\tiny def}}{=}}
\newcommand{\classc}{\ensuremath{\mathcal C}\xspace}
\newcommand{\subgr}[1]{\ensuremath{\mathcal S(#1)}\xspace}
\newcommand{\tw}{\mathit{tw}}
\newcommand{\td}{\mathit{td}}
\newcommand{\Arb}{\mathrm{Arb}}
\newcommand{\prob}[1]{{\mathsf P[#1]}}
\newcommand{\rgraph}{\mathbb{G}_n}
\newcommand{\indsubgr}[1]{\ensuremath{\mathcal I(#1)}\xspace}
\newcommand{\probl}[1]{\textsc{\small #1}}
\newcommand{\logic}{{\mathcal L}}
\newcommand{\fo}{\ensuremath{\mathrm{FO}}\xspace}
\newcommand{\cclass}[1]{\textsf{\upshape #1}}
\newcommand{\ac}[1]{\cclass{AC$^{\cclass{#1}}$}\xspace}
\newcommand{\dobut}[2]{#1\cdot#2}
\begin{document}

\maketitle

\begin{abstract}
Let $F$ be a connected graph with $\ell$ vertices.
The existence of a subgraph isomorphic to $F$ can be defined
in first-order logic with quantifier depth no better than $\ell$,
simply because no first-order formula of smaller quantifier depth
can distinguish between the complete graphs $K_\ell$ and $K_{\ell-1}$.
We show that, for some $F$, the existence of an $F$ subgraph
in \emph{sufficiently large} connected graphs
is definable with quantifier depth $\ell-3$.
On the other hand, this is never possible with quantifier depth
better than $\ell/2$.
If we, however, consider definitions
over connected graphs with \emph{sufficiently large treewidth},
the quantifier depth can for some $F$ be arbitrarily small
comparing to $\ell$ but never smaller than the treewidth of~$F$.
Moreover, the definitions over \emph{highly connected graphs}
require quantifier depth strictly more than the density of $F$.
Finally, we determine the exact values of these descriptive complexity parameters
for all connected pattern graphs $F$ on 4 vertices.
\end{abstract}

\section{Introduction}\label{s:intro}

For a fixed graph $F$ on $\ell$ vertices,
let \subgr F denote the class of all graphs containing
a subgraph isomorphic to $F$.
The decision problem for \subgr F is known as
\probl{Subgraph Isomorphism} problem.
It is solvable in time $O(n^\ell)$ on $n$-vertex input graphs by exhaustive search.
Ne\v{s}et\v{r}il and Poljak \cite{NesetrilP85} showed that
\subgr F can be recognized in time $O(n^{(\omega/3)\ell +2})$, where $\omega<2.373$ is the exponent of fast square
matrix multiplication~\cite{Gall14}.
Moreover, the color-coding method by Alon, Yuster and
Zwick \cite{AlonYZ95} yields the time bound
$$
2^{O(\ell)}\cdot n^{\tw(F)+1}\log n,
$$
where $\tw(F)$ denotes the treewidth of $F$.
On the other hand, the decision problem for \subgr{K_\ell},
that is, the problem of deciding if an input graph contains a clique
of $\ell$ vertices, cannot be solved in time $n^{o(\ell)}$
unless the Exponential Time Hypothesis fails~\cite{ChenHKX06}.

We here are interested in the \emph{descriptive complexity} of \probl{Subgraph Isomorphism}.
A sentence $\Phi$ \emph{defines} a class of graphs $\classc$ if
\begin{equation}
  \label{eq:defines}
G\models\Phi\iff G\in\classc,
\end{equation}
where $G\models\Phi$ means that $\Phi$ is true on $G$.
For a logic $\logic$, we let $D_\logic(\classc)$ (resp.\ $W_\logic(\classc)$)
denote the minimum quantifier depth (resp.\ variable width) of $\Phi\in\logic$
defining $\classc$. Note that $W_\logic(\classc)\le D_\logic(\classc)$.
We simplify notation by writing
\begin{equation}
  \label{eq:WDF}
W_\logic(F)=W_\logic(\subgr F)\text{ and }D_\logic(F)=D_\logic(\subgr F).
\end{equation}

We are primarily interested in the first-order logic of graphs
with relation symbols for adjacency and equality of vertices,
that will be denoted by \fo. We suppose that the vertex set of any $n$-vertex graph
is $\{1,\ldots,n\}$. Seeking the adequate logical formalism for various models of computation,
descriptive complexity theory considers also more expressive logics involving
numerical relations over the integers. Given a set $\mathcal N$ of such relations,
$\fo[\mathcal N]$ is used to denote the extension of \fo whose language
contains symbols for each relation in $\mathcal N$. Of special interest are
$\fo[<]$, $\fo[+,\times]$, and $\fo[\Arb]$, where $\Arb$ indicates that
arbitrary relations are allowed. It is known \cite{Immerman-book,Libkin04}
that $\fo[\Arb]$ and $\fo[+,\times]$ capture (non-uniform) \ac0 and DLOGTIME-uniform \ac0
respectively.

We will simplify the notation \refeq{WDF} further by writing
$D(F)=D_\fo(F)$ and $W(F)=W_\fo(F)$. Dropping \fo in the subscript,
we also use notation like $D_<(F)$ or $W_\Arb(F)$. In this way
we obtain two hierarchies of width and depth parameters. In particular,
$$
W_\Arb(F)\le W_<(F)\le W(F)\text{ and }D_\Arb(F)\le D_<(F)\le D(F).
$$
The relation of $\fo[\Arb]$ to circuit complexity implies that
\subgr F is recognizable on $n$-vertex graphs by bounded-depth unbounded-fan-in circuits of
size $O(n^{W_\Arb(F)})$; see~\cite{Immerman-book,Rossman08}.
The interplay between the two areas has been studied in \cite{KawarabayashiR16,KouckyLPT06,LiRR14,Rossman08,Rossman16}.
Noteworthy, the parameters $W_\Arb(F)$ and $D_\Arb(F)$
admit combinatorial upper bounds
\begin{equation}
  \label{eq:W-tw}
W_\Arb(F)\le\tw(F)+3\text{ and }D_\Arb(F)\le\td(F)+2
\end{equation}
in terms of the treewidth and treedepth of $F$; see~\cite{Rossman-talk}.\footnote{In his presentation \cite{Rossman-talk},
Benjamin Rossman states upper bounds $W_\fo(F)\le\tw(F)+1$ and $D_\fo(F)\le\td(F)$ for the \emph{colorful} version of
\textsc{Subgraph Isomorphism} studied in \cite{LiRR14}.
It is not hard to observe that the auxiliary color predicates can be defined in $\fo[\Arb]$ at the cost
of two extra quantified variables by the color-coding method developed in \cite{AlonYZ95};
see also \cite[Thm.~4.2]{Amano10}.}

The focus of our paper is on \fo without any background arithmetical relations. Our interest
in this, weakest setting is motivated
by the prominent problem on the power of encoding-independent computations; see, e.g., \cite{GraedelG15}.
It is a long-standing open question in finite model theory
as to whether there exists a logic capturing polynomial time on
finite relational structures. The existence of a natural logic capturing polynomial time
would mean that any polynomial-time computation could be made, in a sense,
independent of the input encoding.
If this is true, are the encoding-independent computations necessarily
slower than the standard ones?
This question admits the following natural variation.
Suppose that a decision problem
a priori admits an encoding-independent polynomial-time algorithm, say,
being definable in \fo, like \probl{Subgraph Isomorphism} for a fixed pattern graph $F$.
Is it always true that the running time of this algorithm
can be improved in the standard encoding-dependent Turing model of
computation?

A straightforward conversion of an FO sentence defining \subgr F
into an algorithm recognizing \subgr F results in the time bound $O(n^{D(F)})$
for \probl{Subgraph Isomorphism}, which can actually be improved to $O(n^{W(F)})$;
see \cite[Prop.~6.6]{Libkin04}. The same applies to $\fo[<]$.
The last logic is especially interesting in the context of \emph{order-invariant definitions}.
It is well known \cite{Libkin04,Schweikardt13} that there are
properties of (unordered) finite structures that can be defined in $\fo[<]$
but not in \fo. Even if a property, like \subgr F, is definable in FO,
one can expect that in $\fo[<]$ it can be defined much more succinctly.
As a simple example, take $F$ to be the star graph $K_{1,s}$ and
observe that $D_<(K_{1,s})\le\log_2s+3$ and $W_<(K_{1,s})\le3$ while $W(K_{1,s})=s+1$.

The main goal we pose in this paper is examining abilities and limitations
of the ``pure'' FO in succinctly defining \probl{Subgraph Isomorphism}.
Actually, if a pattern graph $F$ has $\ell$ vertices, then
the trivial upper bound $D(F)\le\ell$ cannot be improved. We have $W(F)=\ell$
simply because no first-order formula with less than $\ell$ variables
can distinguish between the complete graphs $K_\ell$ and $K_{\ell-1}$.
Is this, however, the only reason preventing more succinct definitions of \subgr F?
How succinctly can \subgr F be defined on large enough graphs?
The question can be formalized as follows.
We say that a sentence $\Phi$ defines \subgr F on
\emph{sufficiently large connected graphs} if
there is $k$ such that the equivalence \refeq{defines} with $\classc=\subgr F$ is true for all
connected $G$ with at least $k$ vertices.
Let $W_v(F)$ (resp.\ $D_v(F)$) denote the minimum variable width (resp.\ quantifier depth) of such $\Phi$.

Throughout the paper, we assume that the fixed \textbf{pattern graph $F$ is connected}.
Therefore, $F$ is contained in a host graph $G$ if and only if
it is contained in a connected component of $G$. By this reason,
the decision problem for \subgr F efficiently reduces to its restriction
to connected input graphs. Since it suffices to
solve the problem only on all sufficiently large inputs,
\subgr F is still recognizable in time $O(n^{W_v(F)})$,
while $W_v(F)\le W(F)$.

A further relaxation is motivated by Courcelle's theorem \cite{Courcelle90} saying that every graph property
definable by a sentence in monadic second-order logic
can be efficiently decided on graphs of bounded treewidth.
More precisely, for \probl{Subgraph Isomorphism}
Courcelle's theorem implies that \subgr F is decidable in time
$f(\ell,\tw(G))\cdot n$, which means linear time for any class of input graphs
having bounded treewidth.

Now, we say that a sentence $\Phi$ defines \subgr F on
\emph{connected graphs with sufficiently large treewidth} if
there is $k$ such that the equivalence \refeq{defines}  with $\classc=\subgr F$ is true for all
connected $G$ with treewidth at least $k$.
Denote the minimum variable width (resp.\ quantifier depth)
of such $\Phi$ by $W_\tw(F)$ (resp.\ $D_\tw(F)$).
Fix $k$ that ensures the minimum value $W_\tw(F)$ and recall that, by Courcelle's theorem,
the subgraph isomorphism problem is solvable on graphs with treewidth less than $k$
in linear time. Note that, for a fixed $k$, whether or not $\tw(G)<k$
is also decidable in linear time \cite{Bodlaender96}.
It follows that \subgr F is recognizable even in time $O(n^{W_\tw(F)})$,
while $W_\tw(F)\le W_v(F)$.

The above discussion shows that the parameters $W_v(F)$, $D_v(F)$, $W_\tw(F)$, and $D_\tw(F)$ have clear
algorithmic meaning. Analyzing this setting, we obtain the following results.

\begin{itemize}
\item
We demonstrate that non-trivial definitions over sufficiently large graphs are
possible by showing that $D_v(F)\le v(F)-3$ for some $F$, where $v(F)$ denotes
the number of vertices in $F$. On the other hand, we show limitations
of this approach by proving that $W_v(F)\ge(v(F)-1)/2$ for all~$F$.
\item
The last barrier (as well as any lower bound in terms of $v(F)$)
can be overcome by definitions over graphs with sufficiently large treewidth.
Specifically, for every $\ell$ and $a\le\ell$ there is an $\ell$-vertex $F$
such that $D_\tw(F)\le a$ and, moreover, $\tw(F)=a-1$. On the other hand, $W_\tw(F)\ge\tw(F)$ for all $F$.
Note that, along with \refeq{W-tw}, this implies that
$W_\Arb(F)\le W_\tw(F)+3$.
\end{itemize}

Furthermore, we also consider definitions of \subgr F over graphs of
\emph{sufficiently large connectedness}. Denote the corresponding quantifier depth
parameter by $D_\kappa(F)$ and note that $D_\kappa(F)\le D_\tw(F)\le D_v(F)$
(see Section \ref{s:prel} for details), which motivates our interest in lower bound for $D_\kappa(F)$.
Let $e(F)$ denote the number of edges in $F$. For every pattern graph $F$ with $e(F)>v(F)$,
we prove that $D_\kappa(F)\ge\frac{e(F)}{v(F)}+2$.

Finally, we determine the exact values of the parameters $D_\kappa(F)$, $D_\tw(F)$, and $D_v(F)$
for all connected pattern graphs $F$ on 4 vertices.

\paragraph{Related work.}
In an accompanying paper \cite{induced},
we address the descriptive complexity of the \probl{Induced Subgraph Isomorphism} problem.
Let \indsubgr F denote the class of all graphs containing
an \emph{induced} subgraph isomorphic to $F$.
The state-of-the-art of the algorithmics for \probl{Induced Subgraph Isomorphism}
is different from \probl{Subgraph Isomorphism}. Floderus et al.~\cite{FloderusKLL15}
collected evidences in favour of the conjecture that \indsubgr F for $F$ with $\ell$
vertices cannot be recognized faster than \indsubgr{K_{c\,\ell}}, where $c<1$ is a constant.
Similarly to $D(F)$, we use notation $D[F]=D(\indsubgr F)$ and $W[F]=W(\indsubgr F)$,
where the square brackets indicate that the case of induced subgraphs is considered.
The trivial argument showing that $W(F)=v(F)$ does not work anymore unless $F$
is a complete graph. Proving or disproving that $D[F]=W[F]=v(F)$
seems to be a subtle problem. An example of a pattern graph $F$ for which $D[F]<v(F)$
is given by considering the paw graph, as a consequence of Olariu's characterization
of the class of paw-free graphs \cite{Olariu88}. In \cite{induced}, we prove a
general lower bound $W[F]\ge(1/2-o(1))v(F)$.

\paragraph{Organization of the paper.}
We introduce our setting formally in Section \ref{s:prel}, which also
contains necessary logical and graph-theoretic preliminaries.
The first-order definitions of \subgr F over sufficiently large connected graphs
(the parameters $D_v(F)$ and $W_v(F)$) are addressed in Section \ref{s:large}.
Sections \ref{s:large-tw} and \ref{s:conn} are devoted to the definitions over
graphs of sufficienty large treewidth ($D_\tw(F)$ and $W_\tw(F)$) and of
sufficienty large connectedness ($D_\kappa(F)$)
respectively. The exact values of $D_v(F)$, $D_\tw(F)$, and $D_\kappa(F)$ are
determined for all connected $F$ on 4 vertices in Section \ref{s:four}.
The width parameters are also determined with the exception of $W_\kappa(F)$
for $F$ being the diamond graph and the 4-cycle. We conclude with discussing
further questions in Section~\ref{s:que}.

A preliminary version of this paper appeared in~\cite{VZh16}.

\section{Preliminaries}\label{s:prel}

\subsection{First-order complexity of graph properties}

We consider first-order sentences about graphs
in the language containing the adjacency and the equality
relations.
Let \classc be a first-order definable class of graphs and $\pi$ be a graph parameter.
Let $D_\pi^k(\classc)$ denote the minimum quantifier depth of a first-order sentence $\Phi$
such that, for every connected graph $G$ with $\pi(G)\ge k$,
$\Phi$ is true on $G$ exactly when $G$ belongs to \classc.
Note that $D_\pi^{k}(\classc)\ge D_\pi^{k+1}(\classc)$, and define $D_\pi(\classc)=\min_k D_\pi^k(\classc)$.
In other words, $D_\pi(\classc)$ is the minimum quantifier depth of a first-order sentence
defining \classc over connected graphs with sufficiently large values of~$\pi$.

The \emph{variable width} of a first-order sentence $\Phi$ is the number of first-order variables
used to build $\Phi$; different occurrences of the same variable do not count.
Similalrly to the above, by $W_\pi(\classc)$ we denote the minimum variable width of $\Phi$
defining \classc over connected graphs with sufficiently large~$\pi$.
Note that
$$
W_\pi(\classc)\le D_\pi(\classc).
$$

Recall that a graph is \emph{$k$-connected} if it has more than $k$ vertices,
is connected, and remains connected after removal of any $k-1$ vertices.
The \emph{connectivity} $\kappa(G)$ of $G$ is equal to the maximum $k$
such that $G$ is $k$-connected.
We will consider the depth $D_\pi(\classc)$ and the width $W_\pi(\classc)$ for three parameters $\pi$ of a graph $G$, namely
the number of vertices $v(G)$, the treewidth $\tw(G)$, and the connectivity $\kappa(G)$.
Note that $\tw(G)<v(G)$. Note also that any graph $G$ with $v(G)>k$ and $\tw(G)<k$
can be disconnected by removing fewer than $k$ vertices. Therefore,
every $k$-connected graph has treewidth at least $k$. It follows that
$$
D_v^{k+1}(\classc)\ge D_\tw^k(\classc)\ge D_\kappa^k(\classc)
$$
and, hence,
$$
D_v(\classc)\ge D_\tw(\classc)\ge D_\kappa(\classc).
$$
Similarly,
$$
W_v(\classc)\ge W_\tw(\classc)\ge W_\kappa(\classc).
$$
As it was discussed in Section \ref{s:intro}, the values of $D_v(\classc)$ and $D_\tw(\classc)$,
as well as $W_v(\classc)$ and $W_\tw(\classc)$,
are related to the time complexity of the decision problem for \classc.
Consideration of $D_\kappa(\classc)$ and $W_\kappa(\classc)$ is motivated by the fact that some lower bounds
we are able to show for $D_v(\classc)$ and $D_\tw(\classc)$ actually hold for $D_\kappa(\classc)$
or even for $W_\kappa(\classc)$,
and it is natural to present them in this stronger form.

Recall that \subgr F denotes the class of graphs containing a subgraph
isomorphic to $F$. Simplifying the notation, we write
$D_v(F)=D_v(\subgr F)$, $W_v(F)=W_v(\subgr F)$, etc.

Given two non-isomorphic graphs $G$ and $H$, let
$D(G,H)$ (resp.\ $W(G,H)$) denote the minimum quantifier depth (resp.\ variable width) of a sentence
that is true on one of the graphs and false on the other.

\begin{lemma}\label{lem:DD}
\mbox{}

  \begin{enumerate}
\item
$D_\pi(\classc)\ge d$ if there are connected graphs $G\in\classc$ and $H\notin\classc$
with arbitrarily large values of $\pi(G)$ and $\pi(H)$ such that $D(G,H)\ge d$.
\item
$W_\pi(\classc)\ge d$ if there are connected graphs $G\in\classc$ and $H\notin\classc$
with arbitrarily large values of $\pi(G)$ and $\pi(H)$ such that $W(G,H)\ge d$.
\item
$D_\pi(\classc)\le d$ if $D(G,H)\le d$ for all connected graphs $G\in\classc$ and $H\notin\classc$
with sufficiently large values of $\pi(G)$ and~$\pi(H)$.
  \end{enumerate}
\end{lemma}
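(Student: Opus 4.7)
Each of the three parts links the class-definability parameters $D_\pi(\classc), W_\pi(\classc)$ to the pairwise-distinguishing parameters $D(G,H), W(G,H)$, so I would handle all three in parallel: parts (1) and (2) are contrapositive (non)definability arguments, while part (3) is a standard Hintikka-formula construction.

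For (1), the plan is to suppose $D_\pi(\classc) < d$. Unfolding the definition $D_\pi(\classc) = \min_k D_\pi^k(\classc)$ produces an integer $k$ and a sentence $\Phi$ of quantifier depth at most $d-1$ such that $G \models \Phi \Leftrightarrow G \in \classc$ for every connected $G$ with $\pi(G) \ge k$. The hypothesis of (1), applied at this $k$, furnishes connected $G \in \classc$ and $H \notin \classc$ with $\pi(G), \pi(H) \ge k$ and $D(G,H) \ge d$. But $\Phi$ holds on $G$ and fails on $H$ while having depth $< d$, contradicting the meaning of $D(G,H)$. Part (2) is obtained verbatim by substituting ``variable width'' for ``quantifier depth'' throughout.

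For (3), fix a $k$ witnessing the hypothesis. The plan is to exploit the classical fact that over a finite relational signature there are only finitely many pairwise inequivalent first-order sentences of quantifier depth at most $d$; equivalently, each finite graph $A$ admits a depth-$d$ Hintikka sentence $\chi_A$ whose models are exactly the graphs that agree with $A$ on every depth-$\le d$ sentence. I would then set
\[
\Phi = \bigvee_{A \in \mathcal{T}} \chi_A,
\]
where $\mathcal{T}$ is a finite set of representatives, one per depth-$d$ equivalence class, drawn from the connected graphs $A \in \classc$ with $\pi(A) \ge k$. Then $\Phi$ has quantifier depth $d$, and on a connected input $G$ with $\pi(G) \ge k$ I would verify the defining equivalence as follows: if $G \in \classc$, then $G$ satisfies $\chi_A$ for the representative $A$ of its own depth-$d$ class; if instead $G \notin \classc$ and $G \models \chi_A$ for some $A \in \mathcal{T}$, then $G$ and $A$ are depth-$d$ equivalent, whence $D(A,G) > d$, contradicting the hypothesis $D(A,G) \le d$ applied to the pair $(A,G)$.

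The only non-trivial ingredient is the finiteness of depth-$d$ types, which I would invoke as a standard result from finite model theory rather than reprove; the rest is direct unfolding of the definitions of $D_\pi, W_\pi$ together with the Ehrenfeucht--Fra\"iss\'e meaning of $D(G,H)$ and $W(G,H)$, so I do not anticipate any real obstacle.
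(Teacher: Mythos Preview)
Your proposal is correct and follows essentially the same route as the paper. For parts (1)--(2) the paper likewise just unfolds the definitions (your contrapositive phrasing is equivalent); for part (3) the paper builds $\Phi=\bigvee_G\bigwedge_H\Phi_{G,H}$ from pairwise distinguishing sentences and then prunes to a finite formula using the same finiteness-of-depth-$d$-types fact you invoke, whereas you shortcut this by taking the disjunction of Hintikka sentences directly---a cosmetic difference only.
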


\begin{proof}
Parts 1 and 2 follow directly from the definitions as any sentence defining \classc
on connected graphs with sufficiently large $\pi$ distinguishes between any two graphs
$G\in\classc$ and $H\notin\classc$ with sufficiently large $\pi$.
Let us prove Part 3. By assumption, any two connected graphs $G\in\classc$ and $H\notin\classc$
with sufficiently large $\pi$ (say, with $\pi(G)\ge k$ and $\pi(H)\ge k$)
are distinguished by a sentence $\Phi_{G,H}$ of quantifier depth at most $d$
(that is true on $G$ and false on $H$).
For a connected graph $G\in\classc$ with $\pi(G)\ge k$,
consider the sentence $\Phi_G\feq\bigwedge_{H}\Phi_{G,H}$,
where the conjunction is over all connected $H\notin\classc$ with $\pi(H)\ge k$.
This sentence distinguishes $G$
from all $H\notin\classc$ with $\pi(H)\ge k$ and has quantifier depth at most $d$.
The only problem with it is that the conjunction over $H$ is actually infinite.
Luckily, there are only finitely many pairwise
inequivalent first-order sentences about graphs
of quantifier depth $d$; see, e.g., \cite[Theorem 2.4]{PikhurkoV11}.
Removing all but one formula $\Phi_{G,H}$ from each equivalence class,
we make $\Phi_G$ a legitimate finite sentence.
Now, consider $\Phi\feq\bigvee_{G}\Phi_{G}$,
where the disjunction is over all connected $G\in\classc$ with $\pi(G)\ge k$.
It can be made finite in the same way.
The sentence $\Phi$ defines $\classc$ over connected graphs with $\pi(G)\ge k$ and has quantifier depth $d$.
Therefore, $D_\pi(\classc)\le D^k_\pi(\classc)\le d$.
\end{proof}

Lemma \ref{lem:DD} reduces estimating $D_\pi(\classc)$ to estimating $D(G,H)$
over connected $G\in\classc$ and $H\notin\classc$ with large values of $\pi$.
Also, proving lower bounds for $W_\pi(\classc)$ reduces to proving lower bounds for $W(G,H)$.
For estimating $D(G,H)$ and $W(G,H)$ there is a remarkable tool.

In the \emph{$k$-pebble Ehrenfeucht-Fra{\"\i}ss{\'e} game},
the board consists of
two vertex-disjoint graphs $G$ and $H$.
Two players, \emph{Spoiler} and \emph{Duplicator} (or \emph{he} and \emph{she})
have equal sets of $k$ pairwise different pebbles.
In each round, Spoiler takes a pebble and puts it on a vertex in $G$ or in $H$;
then Duplicator has to put her copy of this pebble on a vertex
of the other graph.
Duplicator's objective is to ensure that the pebbling determines a partial
isomorphism between $G$ and $H$ after each round; when she fails, she immediately loses.
The proof of the following facts can be found in \cite{Immerman-book}:
\begin{enumerate}
\item
 $D(G,H)$ is equal to the
minimum $k$ such that Spoiler has a winning strategy in the $k$-round $k$-pebble
game on $G$ and~$H$.
\item
 $W(G,H)$ is equal to the
minimum $k$ such that, for some $d$, Spoiler has a winning strategy in the $d$-round $k$-pebble
game on $G$ and~$H$.
\end{enumerate}

\subsection{Graph-theoretic preliminaries}

Recall that $v(G)$ denotes the number of vertices in a graph $G$.
The treewidth of $G$ is denoted by~$\tw(G)$.
The \emph{neighborhood} $N(v)$ of a vertex $v$ consists of
all vertices adjacent to $v$. The number $\deg v=|N(v)|$
is called the \emph{degree} of $v$.
The vertex of degree 1 is called \emph{pendant}.

We use the standard notation
$K_n$ for complete graphs,
$P_n$ for paths, and $C_n$ for cycles on $n$ vertices.
Futhermore, $K_{a,b}$ denotes the complete bipartite graph whose
vertex classes have $a$ and $b$ vertices. In particular, $K_{1,n-1}$
is the star graph on $n$ vertices.
The subscript in the name of a graph will almost always denote the number of vertices.
If a graph is indexed by two parameters, their sum is typically equal to the total
number of vertices in the graph.

\begin{figure}
  \centering
\begin{tikzpicture}[every node/.style={circle,draw,inner sep=2pt,fill=Black},scale=.5]
  \begin{scope}
    \path (0,0) node (a1) {}
      (0,1) node (a2) {} edge (a1)
      (0,2) node (a3) {} edge (a2)
      (0,3) node (a4) {} edge (a2)
   (-1,2.5) node (a5) {} edge (a4) edge (a3)
    (1,2.5) node (a6) {} edge (a5) edge (a4) edge (a3);
\node[draw=none,fill=none] at (0,-1) {$L_{4,2}$};
  \end{scope}

  \begin{scope}[xshift=40mm]
    \path (0,0) node (a1) {}
      (0,1) node (a2) {} edge (a1)
      (0,2) node (a3) {} edge (a2)
      (0,3) node (a4) {} edge (a2)
   (-1,2.5) node (a5) {} edge (a3)
    (1,2.5) node (a6) {} edge (a3);
\node[draw=none,fill=none] at (0,-1) {$S_{4,2}$};
  \end{scope}

  \begin{scope}[xshift=80mm]
    \path
      (0,1) node (a2) {}
      (0,2) node (a3) {} edge (a2)
      (0,3) node (a4) {} edge (a2)
   (-1,2.5) node (a5) {} edge (a4) edge (a3)
    (1,2.5) node (a6) {} edge (a5) edge (a4) edge (a3)
  (-.7,1.3) node (a1) {} edge (a3)
   (.7,1.3) node (a7) {} edge (a3);
\node[draw=none,fill=none] at (0,-1) {$J_{4,3}$};
  \end{scope}

  \begin{scope}[xshift=120mm]
    \path[scale=.8,yshift=7.5mm]
      (0,1) node (x) {}
      (0,2) node (a1) {} edge (x)
      (0,3) node (a2) {} edge (a1)
    (-1,.3) node (b1) {} edge (x)
   (-2,-.4) node (b2) {} edge (b1)
     (1,.3) node (c1) {} edge (x)
    (2,-.4) node (c2) {} edge (c1);
\node[draw=none,fill=none] at (0,-1) {$M_{3,2}$};
  \end{scope}
\end{tikzpicture}
  \caption{Special graph families: Lollipops, sparklers, jellyfishes, and megastars.}
  \label{fig:special-graphs}
\end{figure}
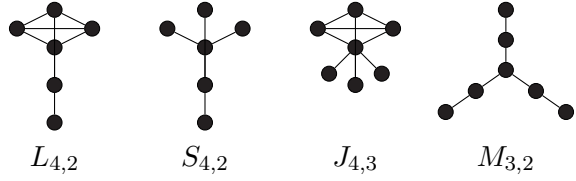

The following definitions are illustrated in Fig.~\ref{fig:special-graphs}.
Let $a\ge3$ and $b\ge1$. The \emph{lollipop graph} $L_{a,b}$ is obtained from $K_a$ and $P_b$
by adding an edge between an end vertex of $P_b$ and a vertex of $K_a$.
We also make a natural convention that $L_{a,0}=K_a$.
Furthermore, the \emph{sparkler graph} $S_{a,b}$ is obtained from $K_{1,a-1}$ and $P_b$
by adding an edge between an end vertex of $P_b$ and the central vertex of~$K_{1,a-1}$.
The \emph{jellyfish graph} $J_{a,b}$ is the result of attaching $b$ pendant vertices
to a vertex of $K_a$. Finally,
the \emph{megastar} graph $M_{s,t}$ is obtained from the star $K_{1,s}$
by subdividing each edge into $P_{t+1}$; thus $v(M_{s,t})=st+1$.

\section{Definitions over sufficiently large graphs}\label{s:large}

Our first goal is to demonstrate that non-trivial definitions over large connected graphs
are really possible. The lollipop graphs $L_{a,1}$ give simple examples of pattern graphs $F$ with $D_v(F)\le v(F)-1$.
Though not so easily, the same can be shown
for the path graphs $P_\ell$. We are able to show better upper bounds using sparkler graphs.

\begin{theorem}\label{thm:S44}
There is a graph $F$ with $D_v(F)\le v(F)-3$. Specifically,
$D_v(S_{4,4})=5$.
\end{theorem}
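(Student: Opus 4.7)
Since $v(S_{4,4})=8$, the equality $D_v(S_{4,4})=5$ yields the general existence $D_v(F)\le v(F)-3$ for $F=S_{4,4}$. I prove the two inequalities separately via Lemma~\ref{lem:DD}.

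\emph{Upper bound.} The strategy is to produce a depth-$5$ first-order sentence defining $\subgr{S_{4,4}}$ on all sufficiently large connected graphs, whence Lemma~\ref{lem:DD}(3) applies. A natural candidate is
\[
\Phi\;:=\;\exists v\bigl(\phi_{\ge 4}(v)\wedge\phi_{P_5}(v)\bigr),
\]
where $\phi_{\ge 4}(v):=\exists u_1\exists u_2\exists u_3\exists u_4\,\bigl(\bigwedge_{i}(u_i\sim v)\wedge\bigwedge_{i<j}(u_i\neq u_j)\bigr)$ of depth $4$ says ``$v$ has $\ge 4$ distinct neighbors,'' and $\phi_{P_5}(v):=\exists x_1\exists x_2\exists x_3\exists x_4\,\bigl(v\sim x_1\wedge x_1\sim x_2\wedge x_2\sim x_3\wedge x_3\sim x_4\wedge\text{pairwise distinct}\bigr)$ of depth $4$ says ``$v$ is the initial vertex of a $P_5$,'' so $\Phi$ has quantifier depth $1+\max(4,4)=5$. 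The direction $G\supseteq S_{4,4}\Rightarrow\Phi(G)$ is immediate. For the converse, suppose $v$ witnesses $\Phi$ in a large connected $G$ with $G\not\supseteq S_{4,4}$: I show $|V(G)|$ is bounded by a universal constant. If some $w$ has $d_G(v,w)\ge 4$, a shortest $v$--$w$ path $v x_1 x_2 x_3 x_4$ keeps $x_2,x_3,x_4$ outside $N(v)$, so three of $v$'s remaining neighbors together with this path form an $S_{4,4}$---contradiction; hence $G\subseteq B_3(v)$. Applying the same reasoning to every $P_5$ from $v$ forces $|N(v)\setminus\{x_1,\dots,x_4\}|\le 2$, whence $\deg v\le 6$; the same cascade bounds the degree of every other vertex that has degree $\ge 4$ and admits a $P_5$ from itself. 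The residual case---a high-degree $v^*$ with no $P_5$ from $v^*$---is handled by observing that each neighbor of $v^*$ sits in $G-v^*$ inside a depth-$2$ tree whose internal vertices themselves satisfy $\Phi$ and hence have degree $\le 6$. These pieces combine to an absolute bound $|V(G)|\le N_0$, and taking $k>N_0$ finishes the proof.

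\emph{Lower bound.} By Lemma~\ref{lem:DD}(1) it suffices to exhibit, for each $n$, connected $G_n\in\subgr{S_{4,4}}$ and $H_n\notin\subgr{S_{4,4}}$ of unbounded order on which Duplicator wins the $4$-round $4$-pebble Ehrenfeucht--Fra{\"\i}ss{\'e} game. The design principle is to make the two graphs locally indistinguishable to any $4$-move Spoiler strategy. Concretely, I would chain $S_{4,4}$-free ``gadgets''---for instance the $12$-vertex graph obtained from $K_4$ by attaching two pendants to every vertex, which satisfies $\Phi$ yet contains no $S_{4,4}$---along a long path to produce $H_n$, and obtain $G_n$ from $H_n$ by a single local modification (e.g., extending one pendant into a longer tendril) that seeds exactly one $S_{4,4}$ but leaves every short local view in $G_n$ matched by some view in $H_n$. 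Duplicator's strategy then maps local balls isomorphically along the chain, following the standard EF pattern.

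\emph{Main obstacle.} The decisive difficulty is the upper-bound backward direction in the sub-case $G\subseteq B_3(v)$ with a vertex $v^*$ of degree $\gtrsim|V(G)|^{1/2}$ admitting no $P_5$ from itself: the absolute bound on $|V(G)|$ requires combining the depth-$3$ envelope $B_3(v)$ with the depth-$2$ tree constraint on $G-v^*$, while tracking how multiple neighbors of $v^*$ can share a component without triggering an $S_{4,4}$ elsewhere. A secondary difficulty, for the lower bound, is verifying that the chained $H_n$ globally avoids any stray $S_{4,4}$ (small-case analysis shows that natural extensions of $S_{4,4}$-free gadgets frequently introduce the pattern) and that Duplicator's strategy withstands every ordering of Spoiler's four moves.
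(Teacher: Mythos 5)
Your upper bound does not work, and the problem is not the technical ``residual case'' you flag but the statement you are trying to prove: the sentence $\Phi$ (``some vertex of degree at least $4$ starts a $P_5$'') simply does not define \subgr{S_{4,4}} on sufficiently large connected graphs. Concretely, take a star with center $z$ and $n$ leaves and add six further vertices $y_0,y_1,y',y'',y_3,y_4$ with edges $zy_0$, $zy_1$, $zy_3$, $y_0y_1$, $y_0y'$, $y_0y''$, $y_3y_4$. This graph is connected and arbitrarily large, and $y_0$ has degree $4$ and starts the path $y_0y_1zy_3y_4$, so $\Phi$ holds; yet it contains no $S_{4,4}$: the only vertices of degree at least $4$ are $z$ and $y_0$, no $P_5$ starts at $z$ (every path from $z$ dies after at most three further vertices), and the unique $P_5$ starting at $y_0$ passes through its neighbour $z$, leaving only $y',y''$ as possible star leaves. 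This is exactly the configuration that Lemma~\ref{lem:S44} (see the right-hand side of Fig.~\ref{fig:proof:S44}) identifies as the surviving possibility in large connected $S_{4,4}$-free graphs ($\deg y_0=4$, $y_0\sim y_2$, and so on); the existence of such false witnesses is precisely why the paper's upper bound is proved by a genuine $5$-round Spoiler strategy --- using Lemma~\ref{lem:large-deg} to produce a high-degree hub, Lemma~\ref{lem:S44} to constrain Duplicator's replies, and a case split on whether $x_0\sim x_2$ and on the adjacencies of $x_1$ to $x',x'',x'''$ --- rather than by a single existential sentence asserting the naive witness. Your ``residual case'' argument is where the falsity surfaces: a high-degree vertex with no $P_5$ emanating from it can have arbitrarily many pendant neighbours, so nothing in your cascade bounds $|V(G)|$, and indeed no bound exists.

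The lower bound is also not established. You acknowledge that your chained gadgets may contain stray copies of $S_{4,4}$ and that the Duplicator strategy is unverified; in fact, any natural way of chaining your $12$-vertex gadget (joining or identifying pendants of consecutive gadgets) immediately creates an $S_{4,4}$ centred at a $K_4$-vertex, because the connection supplies a tail of four vertices disjoint from three of its clique neighbours. The paper's choice is much simpler and complete: take $G=J_{5,n}$ and $H=J_{4,n}$. Then $G$ contains $S_{4,4}$ (tail inside the $K_5$, star leaves among the pendants), $H$ does not, both are connected and arbitrarily large, and Duplicator survives any $4$-round game essentially as in the classical game on $K_5$ versus $K_4$, so $D(G,H)\ge 5$ and $D_v(S_{4,4})>4$ by Lemma~\ref{lem:DD}.
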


For the proof we need two technical lemmas.

\begin{lemma}\label{lem:large-deg}
  Suppose that a connected graph $H$ contains the 4-star $K_{1,4}$ as a subgraph
but does not contain any subgraph $S_{4,4}$. Then $H$ contains a vertex
of degree more than $(v(H)/2)^{1/7}$.
\end{lemma}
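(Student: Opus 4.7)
The plan is to use the $K_{1,4}$ hypothesis to locate a vertex $v$ of degree at least four, exploit the absence of $S_{4,4}$ to bound the eccentricity of $v$ in $H$ by three, and then translate that eccentricity bound into a bound on $v(H)$ via a BFS argument.

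Concretely, set $D = \Delta(H)$ and fix $v \in V(H)$ with $\deg_H(v) \geq 4$; such a $v$ exists since $H \supseteq K_{1,4}$, so $D \geq 4$. The key step is the eccentricity claim: every vertex of $H$ lies within distance three of $v$. Suppose to the contrary that $d_H(v,w) \geq 4$ for some $w$, and take the first five vertices $v = z_0, z_1, z_2, z_3, z_4$ of a shortest $v$-$w$ path in $H$. Because the path is shortest, $d_H(v, z_i) = i$ for each $i$; in particular $z_2, z_3, z_4$ are not neighbors of $v$. Since $\deg_H(v) \geq 4$, one can pick three neighbors $u_1, u_2, u_3 \in N_H(v) \setminus \{z_1\}$, which are automatically distinct from one another and from $v, z_1, z_2, z_3, z_4$. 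The edges $v u_1, v u_2, v u_3$ together with the path $v z_1 z_2 z_3 z_4$ then exhibit a copy of $S_{4,4}$ in $H$ centered at $v$, contradicting the hypothesis.

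Given that $v$ has eccentricity at most three, a BFS from $v$ exhausts $V(H)$ within three layers, so
\[
v(H) \;\leq\; 1 + D + D(D-1) + D(D-1)^2 \;=\; D^3 - D^2 + D + 1 \;<\; D^3
\]
(the inequality $D^2 \geq D+1$ holds already for $D \geq 2$). Hence $D > v(H)^{1/3}$, and since $v(H) \geq v(K_{1,4}) = 5 > 2$ this is strictly larger than $(v(H)/2)^{1/7}$, giving the required bound.

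If there is a main obstacle, it is really only the observation that one should use a \emph{shortest} path in the eccentricity argument rather than an arbitrary path: this is precisely what forces the interior vertices $z_2, z_3, z_4$ to lie outside $N_H(v)$ and therefore avoid any clash with the three pendant-witnesses chosen from $N_H(v)$. Note that the approach in fact delivers the stronger bound $\Delta(H) > v(H)^{1/3}$, which easily implies the stated $(v(H)/2)^{1/7}$.
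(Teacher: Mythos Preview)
Your proof is correct and in fact cleaner than the paper's. The paper argues indirectly: it first observes that $H$ cannot contain $P_{15}$ (since a long path together with the $K_{1,4}$ would force an $S_{4,4}$), then passes to a spanning tree $T$, bounds the radius of $T$ by~$7$ from the absence of $P_{15}$, and concludes $v(H) < 2d^7$ where $d$ is the maximum degree of~$T$. You instead work directly in $H$ and bound the eccentricity of the specific high-degree vertex $v$ by~$3$ via a shortest-path argument; this is both sharper and avoids the detour through a spanning tree as well as the somewhat delicate ``$P_{15}$ plus $K_{1,4}$ gives $S_{4,4}$'' step. The payoff is the stronger bound $\Delta(H) > v(H)^{1/3}$ in place of the paper's $(v(H)/2)^{1/7}$. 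Either suffices for the subsequent application, which only needs a vertex of degree at least~$7$ once $H$ is large enough, but your version makes the required size threshold dramatically smaller.
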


\begin{proof}
$H$ cannot contain $P_{15}$ because, together with $K_{1,4}$,
it would give an $S_{4,4}$ subgraph. Consider an arbitrary spanning tree $T$ in $H$
and denote its maximum vertex degree by $d$ and its radius by $r$.
Note that $v(T)\le 1+d+d(d-1)+\ldots+d(d-1)^{r-1}$.
Since $T$ contains no $P_{15}$, we have $r\le 7$. It follows that $v(H)=v(T)<2d^{7}$.
\end{proof}

Let $\sim$ denote the adjacency relation and recall that $N(v)$ denotes the
neighborhood of a vertex~$v$.

\begin{lemma}\label{lem:S44}
Let $y_0\in V(H)$ and assume that
\begin{itemize}
\item $H$ is a sufficiently large connected graph,
\item $H$ does not contain $S_{4,4}$,
\item $\deg y_0\geq 4$,
\item $y_0y_1y_2y_3y_4$ is a path in $H$.
\end{itemize}
Then (see Fig.~\ref{fig:proof:S44})
\begin{enumerate}
\item $\deg y_0=4$,
\item $y_0\sim y_2$, $y_0\nsim y_3$, $y_0\nsim y_4$,
\item if $N(y_0)=\{y_1,y_2,y',y''\}$, then $y_1\nsim y'$ and $y_1\nsim y''$.
\end{enumerate}
\label{44}
\end{lemma}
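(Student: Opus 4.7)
The plan is to establish each conclusion by contradiction: assume the conclusion fails and exhibit an $S_{4,4}$ subgraph of $H$, contradicting the hypothesis. The pivotal observation, used throughout, is that for any path of the form $z_0z_1z_2z_3z_4$ in $H$ starting at $y_0=z_0$, the set $W(P):=N(y_0)\setminus V(P)$ satisfies $|W(P)|\le 2$; otherwise, three distinct vertices of $W(P)$ supply pendant leaves of an $S_{4,4}$ centered at $y_0$ with tail $z_1z_2z_3z_4$. Applied to the given path this yields $|W|\le 2$ for $W:=N(y_0)\setminus\{y_1,y_2,y_3,y_4\}$, and the crude bound $\deg y_0\le 1+|N(y_0)\cap\{y_2,y_3,y_4\}|+|W|\le 6$.

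I would address the conclusions in the order Part~2, Part~1, Part~3. For Part~2, the assertion $y_0\sim y_2$ follows from the other two, since $\deg y_0\ge 4$ and $|W|\le 2$ force $|N(y_0)\cap\{y_2,y_3,y_4\}|\ge 1$. To rule out $y_0\sim y_4$, I exploit the 5-cycle $y_0y_1y_2y_3y_4y_0$ created by this edge: it supplies alternative $P_5$'s through $y_0$ (for example $y_0y_4y_3y_2y_1$), each reconfirming $|W|\le 2$. The large-graph hypothesis then enters via Lemma~\ref{lem:large-deg}: its $K_{1,4}$-hypothesis is met because $\deg y_0\ge 4$, and the $P_{15}$-freeness of an $S_{4,4}$-free graph together with connectedness forces external vertices to attach to the cycle in abundance. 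A case analysis on the attachment site produces an $S_{4,4}$ centered at the neighbor-enriched cycle vertex; for instance, if $y_1$ acquires an outside neighbor $u\notin\{y_0,y_2\}$, then the $P_5$ $y_1y_0y_4y_3y_2$ combined with $u$ and two further external attachments as leaves furnishes an $S_{4,4}$ centered at $y_1$. Analogous constructions dispatch attachments at $y_2,y_3,y_4$, and eliminating $y_0\sim y_3$ follows the same template with the $P_5$'s $y_0y_3y_4\cdots$ and $y_0y_3y_2y_1\cdots$ built from external attachments. Part~1 is then arithmetic: $\deg y_0=2+|W|$, and $\deg y_0\ge 4$ with $|W|\le 2$ jointly force $\deg y_0=4$ and $|W|=2$.

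For Part~3, let $N(y_0)=\{y_1,y_2,y',y''\}$ and suppose $y_1\sim y'$ for contradiction. Since $y_0\sim y_2$ by Part~2, the sequence $y'y_0y_2y_3y_4$ is a $P_5$ starting at $y'$. Bootstrapping the lemma---proved simultaneously for all vertex/path configurations meeting its hypotheses---to $y'$: if $\deg y'\ge 4$, Part~(3) applied at $y'$ forces every neighbor of $y'$ outside $\{y_0,y_2,y_3,y_4\}$ to be non-adjacent to~$y_0$, but $y_1$ is such a neighbor and $y_0\sim y_1$, a contradiction. Otherwise $\deg y'\le 3$, and largeness once more invokes Lemma~\ref{lem:large-deg} to produce a high-degree vertex elsewhere in $H$: if that vertex lies on any $P_5$, Parts~1 and~2 applied there clamp its degree to~$4$, contradicting largeness, whereas if it does not, its bounded eccentricity forces an extension of our $P_5$ directly supplying the missing leaf needed to complete an $S_{4,4}$. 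A symmetric argument eliminates $y_1\sim y''$.

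The principal obstacle is Part~2, specifically the case $y_0\sim y_4$: the 5-cycle's symmetries admit numerous alternative $P_5$'s and candidate centers, so the large-graph hypothesis must be leveraged via Lemma~\ref{lem:large-deg} to force a branching external vertex that completes an $S_{4,4}$ while keeping the eight constructed vertices pairwise distinct. This disjointness bookkeeping across subcases is the delicate crux of the argument.
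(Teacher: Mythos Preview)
Your proposal has two genuine gaps.

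\textbf{Part 3 is circular.} You write ``Part~(3) applied at $y'$ forces every neighbor of $y'$ outside $\{y_0,y_2,y_3,y_4\}$ to be non-adjacent to~$y_0$''. That is precisely the conclusion you are trying to establish, just at a different vertex/path configuration in the \emph{same} graph $H$. There is no inductive parameter to descend on, so ``proved simultaneously'' does not rescue the argument. Applying Parts~1 and~2 at $y'$ would be legitimate (those can be proved independently of Part~3), but they only tell you $N(y')=\{y_0,y_2,a,b\}$ with $y_1\in\{a,b\}$; you still need something that forbids $y_0\sim y_1$, and that is exactly Part~3. Your fallback for $\deg y'\le 3$ is also incomplete: you assert that the high-degree vertex $z$ from Lemma~\ref{lem:large-deg} either starts a $P_5$ (contradiction with Parts~1--2) or ``its bounded eccentricity forces an extension of our $P_5$'', but you never verify either alternative.

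\textbf{Part 2 is under-argued.} In your sample subcase (ruling out $y_0\sim y_4$, attachment at $y_1$), you say the $P_5$ tail $y_1y_0y_4y_3y_2$ together with $u$ ``and two further external attachments'' gives $S_{4,4}$ centered at $y_1$. But an $S_{4,4}$ needs \emph{three} leaves at the center, and you have only produced one neighbor $u$ of $y_1$ outside the cycle; nothing guarantees two more. The phrase ``analogous constructions dispatch attachments at $y_2,y_3,y_4$'' inherits the same defect.

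The paper's proof avoids both problems by a single organizing idea you never use: invoke Lemma~\ref{lem:large-deg} once to obtain a vertex $z$ with $\deg z\ge 7$, and then \emph{track where $z$ is}. Any $P_5$ emanating from $z$ immediately yields $S_{4,4}$ (three of $z$'s $\ge 7$ neighbors lie off the path), so one shows successively that $y_0\nsim y_4$, $y_0\nsim y_3$, and finally $z\in\{y_0,\dots,y_4\}$ with $z=y_2$ the only possibility. Part~3 then falls out in one line: if $y_1\sim y'$, the sequence $y_2y_1y'y_0y''$ is a $P_5$ from $z=y_2$, giving $S_{4,4}$. This localization of $z$ is the missing ingredient in your attempt.
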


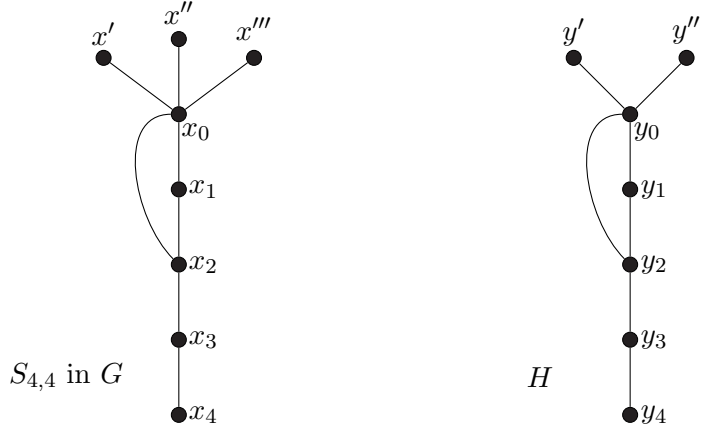
\begin{figure}
  \centering
\begin{tikzpicture}[every node/.style={circle,draw,inner sep=2pt,fill=Black}]
  \begin{scope}
\path (0,0) node (x0) {}
       (0,-1) node (x1) {} edge (x0)
       (0,-2) node (x2) {} edge (x1) edge[out=135, in=180] (x0)
       (0,-3) node (x3) {} edge (x2)
       (0,-4) node (x4) {} edge (x3)
       (0,1) node (x'') {} edge (x0)
       (-1,.75) node (x') {} edge (x0)
       (1,.75) node (x''') {} edge (x0);
\node[draw=none,fill=none,below right] at (x0) {$x_0$};
\node[draw=none,fill=none,right] at (x1) {$x_1$};
\node[draw=none,fill=none,right] at (x2) {$x_2$};
\node[draw=none,fill=none,right] at (x3) {$x_3$};
\node[draw=none,fill=none,right] at (x4) {$x_4$};
\node[draw=none,fill=none,above] at (x') {$x'$};
\node[draw=none,fill=none,above] at (x'') {$x''$};
\node[draw=none,fill=none,above] at (x''') {$x'''$};
\node[draw=none,fill=none] at (-1.5,-3.5) {$S_{4,4}$ in $G$};
  \end{scope}
  \begin{scope}[xshift=60mm]
\path (0,0) node (x0) {}
       (0,-1) node (x1) {} edge (x0)
       (0,-2) node (x2) {} edge (x1) edge[out=135, in=180] (x0)
       (0,-3) node (x3) {} edge (x2)
       (0,-4) node (x4) {} edge (x3)
       (-.75,.75) node (x') {} edge (x0)
       (.75,.75) node (x'') {} edge (x0);
\node[draw=none,fill=none,below right] at (x0) {$y_0$};
\node[draw=none,fill=none,right] at (x1) {$y_1$};
\node[draw=none,fill=none,right] at (x2) {$y_2$};
\node[draw=none,fill=none,right] at (x3) {$y_3$};
\node[draw=none,fill=none,right] at (x4) {$y_4$};
\node[draw=none,fill=none,above] at (x') {$y'$};
\node[draw=none,fill=none,above] at (x'') {$y''$};
\node[draw=none,fill=none] at (-1.2,-3.5) {$H$};
  \end{scope}
\end{tikzpicture}
  \caption{Proof of Theorem \ref{thm:S44}.}
  \label{fig:proof:S44}
\end{figure}

\begin{proof}
By Lemma \ref{lem:large-deg} we know that $H$ must contain a vertex $z$ of large degree,
namely $\deg z\geq 7$. We have $y_0\nsim y_4$ for else $H$ would contain a cycle $C_5$ and, together with $z$,
this would give us a subgraph $S_{4,4}$ (because, by connectedness of $H$, we would have a path $P_5$ emanating from $z$).
Therefore, $y_0$ has a neighbor $y'\notin\{y_1,y_2,y_3,y_4\}$. Furthermore,
$y_0\nsim y_3$ for else, considering a path from $z$ to one of the
vertices $y',y_0,y_1,y_2,y_3,y_4$, we get a $P_5$ emanating from $z$ and, hence, an $S_{4,4}$.
Therefore, $y_0$ has another neighbor $y''\notin\{y',y_1,y_2,y_3,y_4\}$.  Furthermore,
$y_0\sim y_2$ for else $y_0$ would have three neighbors $y',y'',y'''$ different from $y_1,y_2,y_3,y_4$,
which would give $S_{4,4}$. By the same reason, $y_0$ has no other neighbors, that is,
$N(y_0)=\{y_1,y_2,y',y''\}$ and $\deg y_0=4$. Note that $z\in\{y_0,y_1,y_2,y_3,y_4\}$
for else we easily get an $S_{4,4}$ by considering a path from $z$ to one of these vertices.
It is also easy to see that $z\neq y_0,y_4,y_3,y_1$
(for example, if $\deg y_1\geq 7$, then it would give an $S_{4,4}$ with tail $y_1y_0y_2y_3y_4$).
Therefore, $z=y_2$. If $y_1\sim y'$ or $y_1\sim y''$, we would have an $S_{4,4}$
with tails $y_2y_1y'y_0y''$ or $y_2y_1y''y_0y'$ respectively.
\end{proof}

\begin{proof}[Proof of Theorem \ref{thm:S44}]
We are now ready to prove the upper bound $D_v(S_{4,4})\le5$.
Consider sufficiently large connected graphs $G$ and $H$ and
suppose that $G$ contains an $S_{4,4}$, whose vertices are labeled as in Fig.~\ref{fig:proof:S44},
and $H$ contains no copy of $S_{4,4}$.
We describe a winning strategy for Spoiler in the game on $G$ and~$H$.

{\it 1st round}. Spoiler pebbles $x_0$. Denote the response of Duplicator in $H$ by $y_0$.
Assume that $\deg y_0\geq 4$ for else Spoiler
wins in the next $4$ moves. Assume that $x_0\sim x_2$ for else Spoiler wins by pebbling $x_1,x_2,x_3,x_4$
(if Duplicator responds with a path $y_0y_1y_2y_3y_4$, she loses by Condition 2 in Lemma~\ref{44}).

{\it 2nd round}. Spoiler pebbles $x_1$. Denote the response of Duplicator in $H$ by $y_1$.
Assume that there is a path $y_0y_1y_2y_3y_4$ for else Spoiler wins in the next $3$ moves.

\Case 1{$x_1$ is adjacent to any of the vertices $x',x'',x'''$, say, to $x'$.}
Spoiler pebbles $x_2$ and $x'$ and wins. Indeed, Duplicator has to respond with two vertices in $H$
both in $N(y_0)\cap N(y_1)$, which is impossible by Conditions 1 and 3 of Lemma~\ref{44}.

\Case 2{$x_1\nsim x'$, $x_1\nsim x''$, $x_1\nsim x'''$.}
Spoiler wins by pebbling $x',x'',x'''$. Duplicator has to respond with three vertices
in $N(y_0)\setminus N(y_1)$, which is impossible by Conditions 1 and 2 of Lemma~\ref{44}.

This completes the proof of the upper bound. On the other hand, we have
$D_v(S_{4,4})>4$ by considering the jellyfish graphs $G=J_{5,n}$ and $H=J_{4,n}$.
\end{proof}

With more technical effort, we can show that $D_v(F)\le v(F)-3$ for infinitely many $F$,
namely for all $F=S_{a,a}$ (the proof of this fact is rather involved and will appear elsewhere).

We now show general lower bounds for $D_v(F)$ and $W_v(F)$.
For this, we need some definitions.
Let $v_0v_1\ldots v_t$ be an induced path in a graph $G$.
We call it \emph{pendant} if $\deg v_0\ne2$, $\deg v_t=1$ and $\deg v_i=2$ for all $1\le i<t$.
Furthermore, let $S$ be an induced star $K_{1,s}$ in $G$
with the central vertex $v_0$. We call $S$ \emph{pendant}
if all its pendant vertices are pendant also in $G$,
and in $G$ there is no more than $s$ pendant vertices adjacent to $v_0$.
The definition ensures that a pendant path (or star) cannot be
contained in a larger pendant path (or star). As an example,
note that the sparkler graph $S_{s+1,t}$ has a pendant $P_{t+1}$
and a pendant~$K_{1,s}$.

Let $p(F)$ denote the maximum $t$ such that $F$ has a pendant path~$P_{t+1}$.
Similarly, let $s(F)$ denote the maximum $s$ such that $F$ has a pendant star~$K_{1,s}$.
If $F$ has no pendant vertex, then we set $p(F)=0$ and $s(F)=0$.

\begin{theorem}\label{thm:ell/2}
$D_v(F)\ge(v(F)+1)/2$ and $W_v(F)\ge(v(F)-1)/2$ for every connected $F$ unless $F=P_2$ or $F=P_3$.
\end{theorem}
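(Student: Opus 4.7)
The strategy is to invoke Lemma~\ref{lem:DD}(1) and~(2): it suffices to construct, for every connected $F\notin\{P_2,P_3\}$ with $\ell:=v(F)\ge 3$, an infinite family of pairs $(G_n,H_n)_{n\ge 1}$ of connected graphs with $v(G_n), v(H_n)\to\infty$, $G_n\in\subgr F$ and $H_n\notin\subgr F$, and then to exhibit Duplicator strategies showing $D(G_n,H_n)\ge\lceil(\ell+1)/2\rceil$ and $W(G_n,H_n)\ge\lceil(\ell-1)/2\rceil$. By the game characterizations of these parameters recalled at the end of Section~\ref{s:prel}, these strategies must let Duplicator survive, respectively, a $\lceil(\ell-1)/2\rceil$-round $\lceil(\ell-1)/2\rceil$-pebble game and a $\lceil(\ell-3)/2\rceil$-pebble game of unbounded length.

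I would construct $(G_n,H_n)$ by a padding technique. Pick a distinguished vertex $w\in V(F)$ and a small local modification $F'$ of $F$, and let $G_n$ be $F$ with a long pendant path $P_n$ attached at~$w$, and $H_n$ be $F'$ with the same path attached at the analogous vertex. The choice of $w$ and $F'$ would use the pendant-path parameter $p(F)$ and pendant-star parameter $s(F)$ just introduced. If $F$ has a long pendant path (i.e.\ $p(F)\ge 2$), take $w$ to be its pendant endpoint and let $F'$ be obtained by converting that pendant path into a pendant star of the same order, so that the pendant structure is changed enough to guarantee $F'\notin\subgr F$. Symmetrically, if $F$ has a large pendant star but no long pendant path, swap the roles. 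If $F$ has minimum degree at least~$2$ and hence no pendant vertex at all, $F$ contains a cycle, and we take $F'$ to be $F$ with an edge $e$ of a shortest cycle removed, attaching the long path at one endpoint of~$e$.

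Duplicator's strategy in all cases: match Spoiler's moves along the long attached path pointwise (the tails of $G_n$ and $H_n$ are literally isomorphic as rooted paths, so this is trivially consistent), and on the ``core'' portion of size~$\ell$ exploit a partial isomorphism between $F$ and~$F'$. The key local fact is that every $\lceil(\ell-3)/2\rceil$-element subset of $V(F)$ has an isomorphic image in~$V(F')$ and vice versa, which is what lets $F$ and~$F'$ look alike to Duplicator with so few pebbles at her disposal. The bound $(\ell\pm 1)/2$ reflects the intuition that Spoiler must pebble a fraction of~$V(F)$ covering both ``sides'' of the modification simultaneously in order to expose the discrepancy.

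The main obstacle will be the combinatorial verification that $H_n\notin\subgr F$ uniformly across all three structural cases for $F$; in particular, ruling out copies of~$F$ in~$H_n$ that partially use vertices of the long attached path requires a careful case analysis that distinguishes pendant-heavy $F$ from $F$ of minimum degree at least~$2$. The exceptions $F=P_2$ and $F=P_3$ arise precisely because for these small graphs no modification $F'$ of the kind above exists with the required properties: the graph is too sparse for the proposed rerouting to both avoid containing~$F$ and retain enough symmetry for Duplicator's strategy to go through.
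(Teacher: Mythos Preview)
Your proposal has a genuine gap, and the paper takes a quite different (and cleaner) route.

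The central problem is the unproven ``key local fact'' that every $\lceil(\ell-3)/2\rceil$-element subset of $V(F)$ has an isomorphic image in $V(F')$ and vice versa. You state this as if it were a routine observation, but it is precisely the heart of the matter, and there is no reason to expect it to hold for the modifications you describe. Even if it did hold, it would not suffice: for the width bound you need a coherent back-and-forth system, not merely a collection of partial isomorphisms, and for the depth bound you need survival for $\lceil(\ell-1)/2\rceil$ rounds with that many pebbles, which you never justify. Moreover, your verification that $H_n\notin\subgr F$ is not merely ``the main obstacle'' but can actually fail for your construction as stated: if $F$ has a long pendant path and you replace it by a star, then re-attaching $P_n$ at the ``analogous'' leaf of that star recreates a long path in $H_n$, which can easily contain~$F$ again. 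The proposal never pins down where the bound $(\ell+1)/2$ is supposed to come from combinatorially.

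The paper avoids all of this by \emph{not} attaching anything to $F$. Instead, it uses universal host families: lollipops $G=L_{\ell-t,n}$, $H=L_{\ell-t-1,n}$ (where $t=p(F)$) and jellyfishes $G=J_{\ell-s,n}$, $H=J_{\ell-s-1,n}$ (where $s=s(F)$). Containment of $F$ in $G$ and non-containment in $H$ are then easy structural facts, and the game reduces to distinguishing a $K_a$ from a $K_{a-1}$ inside otherwise identical graphs, giving $D(G,H)\ge a$ and $W(G,H)\ge a-1$ immediately. This yields $D_v(F)\ge\ell-t$, $D_v(F)\ge\ell-s$ (and similarly for $W_v$). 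The bound $(\ell+1)/2$ then drops out of the one-line combinatorial observation $p(F)+s(F)<v(F)$, since $\min(t,s)\le(\ell-1)/2$. No Duplicator strategy tailored to the structure of $F$ is ever needed.
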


\begin{proof}
Denote
$$
\ell=v(F),\ t=p(F)\text{ and }s=s(F).
$$
We begin with noticing that
\begin{equation}\label{eq:pendant-path}
D_v(F)\ge\ell-t \text{ and } W_v(F)\ge\ell-t-1.
\end{equation}
Indeed, this is obvious if $F$ is a path, that is, $F=P_{t+1}$. If $F$ is not a path,
we consider lollipop graphs $G=L_{\ell-t,n}$ and $H=L_{\ell-t-1,n}$ for each $n\ge t$
(note that $\ell\ge t+3$ and, if $\ell=t+3$, then $H=L_{2,n}=P_{n+2}$). Obviously,
$G$ contains $F$, and $H$ does not. It remains to note that $D(G,H)\ge\ell-t$ and $W(G,H)\ge\ell-t-1$.

We also claim that
\begin{equation}\label{eq:pendant-star}
D_v(F)\ge\ell-s \text{ and } W_v(F)\ge\ell-s-1.
\end{equation}
This is obvious if $F$ is a star, that is, $F=K_{1,s}$. If $F$ is not a star,
we consider jellyfish graphs $G=J_{\ell-s,n}$ and $H=J_{\ell-s-1,n}$ for each $n\ge s$
(note that $\ell\ge s+3$ and, if $\ell=s+3$, then $H=J_{2,n}=K_{1,n+1}$).
Clearly, $G$ contains $F$, and $H$ does not. It remains
to observe that $D(G,H)\ge\ell-s$ and $W(G,H)\ge\ell-s-1$.

Let $F=K_{1,\ell-1}$ or $F=P_\ell$, where $\ell\ge4$.
Using \refeq{pendant-path} and \refeq{pendant-star} respectively,
we get $D_v(F)\ge\ell-1\ge\frac{\ell+1}2$ and, similarly, $W_v(F)\ge\ell-2\ge\frac{\ell-1}2$.
Assume, therefore, that $F$ is neither a star nor a path.
In this case we claim that
\begin{equation}
  \label{eq:stell}
t+s<\ell.
\end{equation}
This is obviously true if $F$ has no pendant vertex, that is, $t=s=0$.
Suppose that $F$ has a pendant vertex and, therefore,
both $t>0$ and $s>0$. Consider an arbitrary spanning tree $T$ of $F$ and
note that $T$ contains all pendant paths and stars of $F$.
Fix a longest pendant path $P$ and a largest pendant star $S$ in $F$.
If $P$ and $S$ share at most one common vertex, we readily get \refeq{stell}.
If they share two vertices, then $S=K_{1,1}$, i.e., $s=1$,
and $t+1<\ell$ follows from the assumption that $F$ is not a path.

The theorem readily follows from \refeq{pendant-path}, \refeq{pendant-star}, and~\refeq{stell}.
\end{proof}

\section{Definitions over graphs of sufficiently large treewidth}\label{s:large-tw}

Theorem \ref{thm:ell/2} poses limitations on the succinctness
of definitions over sufficiently large connected graphs.
We now show that there are no such limitations for definitions
over connected graphs with sufficiently large treewidth.

The Grid Minor Theorem says that
every graph of large treewidth contains a large grid minor; see \cite{Diestel}.
The strongest version of this result belongs to Chekuri and Chuzhoy \cite{ChekuriC14}
who proved that, for some $\epsilon>0$, every graph $G$ of treewidth $k$
contains the $m\times m$ grid as a minor with $m=\Omega(k^\epsilon)$.
If $m>2b$, then $G$ must contain $M_{3,b}$ as a subgraph.
This applies also to all subgraphs of $M_{3,b}$. The following
result is based on the fact that a graph of large treewidth contains a long path.

\begin{theorem}\label{thm:L_ab}
For all $a$ and $\ell$ such that $3\le a\le\ell$ there is a graph $F$ with
$v(F)=\ell$ and $\tw(F)=a-1$ such that $D_\tw(F)\le a$. Specifically,
  $D_\tw(L_{a,b})=W_\kappa(L_{a,b})=a$ if $a\ge3$ and $b\ge0$.
\end{theorem}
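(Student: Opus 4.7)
The plan is to prove the upper bound $D_\tw(L_{a,b})\le a$ and the lower bound $W_\kappa(L_{a,b})\ge a$; combined with the inequalities $D_\tw(F)\ge W_\tw(F)\ge W_\kappa(F)$ established in Section~\ref{s:prel}, these imply both claimed equalities.

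For the upper bound I will argue that on connected graphs $G$ of sufficiently large treewidth, $L_{a,b}\subseteq G$ if and only if $K_a\subseteq G$; since the latter property is expressible in first-order logic with $a$ variables and quantifier depth $a$, this is enough. The ``only if'' direction is trivial. For the converse, the Grid Minor Theorem together with the observation recorded in the excerpt that a sufficiently large grid contains $M_{3,b}$ as a subgraph guarantees that once $\tw(G)$ is large enough, $G$ contains a path $P=v_0v_1\ldots v_{M-1}$ with $M$ as large as we please. Fix a copy of $K_a$ in $G$ on vertex set $K$ and set $I=\{i:v_i\in K\}$. If $I\ne\emptyset$, then $\{0,\ldots,M-1\}\setminus I$ decomposes into at most $a+1$ maximal runs of consecutive indices, and for $M\ge(a+1)b$ some run has length $\ge b$. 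Every such run is bordered on $P$ by a vertex of $K$ (a prefix/suffix by the first/last element of $I$, an internal run on both sides), so we use that $K$-vertex as the attachment point $u_1$ and the $b$ consecutive run-vertices closest to it as the pendant path, producing the desired copy of $L_{a,b}$. If $I=\emptyset$, invoke connectedness: pick a shortest path $Q=y_0y_1\ldots y_d$ from $V(P)$ to $K$ with $y_0=v_j$. By minimality, $y_1,\ldots,y_{d-1}$ avoid both $V(P)$ and $K$, so concatenating $Q$ with the longer half of $P$ at $v_j$ yields a simple path of length $\ge (M-1)/2$ that starts at $y_d\in K$ and has all other vertices outside $K$; for $M$ large we again extract $L_{a,b}$ from it.

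For the lower bound take $G_m=K_{a\cdot m}$ (the complete $a$-partite graph with parts of size $m$) and $H_m=K_{(a-1)\cdot m}$. Both have connectivity tending to infinity with $m$, namely $\kappa(G_m)=(a-1)m$ and $\kappa(H_m)=(a-2)m$. Choosing one vertex from each part of $G_m$ gives a $K_a$, and extending it by a pendant path alternating between two parts shows $L_{a,b}\subseteq G_m$ once $m$ is large; on the other hand $H_m$ contains no $K_a$ by pigeonhole, hence no $L_{a,b}$. In the $(a-1)$-pebble Ehrenfeucht-Fra{\"\i}ss{\'e} game on $(G_m,H_m)$, Duplicator preserves the invariant that the partition of currently pebbled vertices into parts matches on the two sides under the pebble correspondence. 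Since only $a-1$ pebbles are in play, at most $a-1$ parts are occupied on either side, and $H_m$ has precisely $a-1$ parts available to host them; Duplicator responds to each move by pebbling in the part prescribed by the invariant, choosing a fresh vertex there (available for $m$ large). Hence Duplicator wins for every number of rounds, so $W(G_m,H_m)\ge a$, and Lemma~\ref{lem:DD}(2) delivers $W_\kappa(L_{a,b})\ge a$.

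The main obstacle in this plan is the upper-bound case in which the clique $K$ and the long path $P$ are vertex-disjoint; there a short detour via the shortest connecting path $Q$ is needed, and one must bookkeep carefully to ensure the spliced simple path is still long enough to accommodate the pendant $P_b$. By contrast, the lower-bound step is an essentially standard back-and-forth on complete multipartite graphs, transported to the $\kappa$-setting by the explicit computation of $\kappa(G_m)$ and $\kappa(H_m)$.
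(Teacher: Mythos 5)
Your proposal is correct and follows essentially the same route as the paper: the upper bound reduces $\subgr{L_{a,b}}$ to $\subgr{K_a}$ on connected graphs of large treewidth via a long path combined with the clique, and the lower bound plays the $(a-1)$-pebble game on the complete multipartite graphs $K(a,n)$ versus $K(a-1,n)$, which are highly connected. You merely spell out details the paper leaves implicit (the splicing of the long path with the clique, and Duplicator's part-preserving strategy), and these details are sound.
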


Note for comparison that $W_v(L_{a,b})\ge a+b-2$, as follows from the bound \refeq{pendant-star}
in the proof of Theorem~\ref{thm:ell/2}.

\begin{proof}
We first prove the upper bound $D_\tw({L_{a,b}})\le a$.
If a connected graph $H$ of large treewidth does not contain $L_{a,b}$, it
cannot contain even $K_a$ for else $K_a$ could be combined
with a long path to give $L_{a,b}$.
Therefore, Spoiler wins on $G\in\subgr{L_{a,b}}$
and such $H$ in $a$ moves.

For the lower bound $W_\kappa({L_{a,b}})\ge a$,
consider $G=K(a,n)$ and $H=K(a-1,n)$, where
$K(a,n)$ denotes the complete $a$-partite
graph with each part having $n$ vertices. Note that the graph $K(a,n)$ is $(a-1)n$-connected.
If $n>b$, then $G$ contains $L_{a,b}$, while $H$ for any $n$ does not
contain even $K_a$. It remains to note that
$W(G,H)\ge a$ if $n\ge a-1$.
\end{proof}

We now prove a general lower bound for $W_\tw(F)$ in terms of the treewidth~$\tw(F)$.

Using the terminology of \cite[Chap.~5]{JLR-book}, we define the \emph{core} $F_0$
of $F$ to be the graph obtained from $F$ by removing, consecutively and as
long as possible, vertices of degree at most 1. If $F$ is not a forest,
then $F_0$ is nonempty; it consists of all cycles of $F$ and the paths
between them.

We will use the well-known fact that there are cubic graphs of arbirary large treewidth.
This fact dates back to Pinsker \cite{Pinsker73} who showed that a random cubic graph
with high probability has good expansion properties, implying linear treewidth.

\begin{theorem}\label{thm:pendant-path-tw}
\mbox{}

\begin{enumerate}
\item
$W_\tw(F)\ge v(F_0)$ for every~$F$.
\item
$W_\tw(F)\ge\tw(F)+1$ for every connected $F$ except for the case that
$F$ is a subtree of some 3-megastar~$M_{3,b}$.
\end{enumerate}
\end{theorem}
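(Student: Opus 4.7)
Both parts follow from Lemma~\ref{lem:DD}(2): for each threshold $N$ I must exhibit connected graphs $G \supseteq F$ and $H \not\supseteq F$ with $\tw(G),\tw(H) \ge N$ on which Duplicator wins the $(d-1)$-pebble Ehrenfeucht-Fra{\"\i}ss{\'e} game, where $d = v(F_0)$ for Part~1 and $d = \tw(F)+1$ for Part~2. For Part~1 the plan is to generalize the argument of Theorem~\ref{thm:L_ab}, where $G = K(a,n)$ and $H = K(a-1,n)$ suffice in the special case $F_0 = K_a$. Writing $k = v(F_0)$ for an arbitrary core, I would build a template encoding $F_0$ as an edge pattern on $k$ roles, blow up each role into an independent set of size $n$, couple roles corresponding to edges of $F_0$ by complete bipartite edges, and decorate the result with enough copies of the pendant components of $F$ to realize $F$ as a subgraph of $G$. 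The graph $H$ comes from the same template but with one role merged into another (or deleted) in a way that destroys at least one edge of $F_0$, giving $H \not\supseteq F_0 \supseteq F$. Both graphs have treewidth $\Omega(n)$; Duplicator's $(k-1)$-pebble strategy assigns each pebbled vertex a role and answers in the opposite graph with a representative of the same role, using the $n$ options per role to preserve pebbled adjacencies.

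\textbf{Part 2.} For $H$ I would use cubic graphs of arbitrarily large treewidth from Pinsker's theorem, adapted according to why $F$ is not a subtree of any $M_{3,b}$. If $F$ contains a cycle, take $H$ with girth larger than the longest cycle of $F$, which rules out every cycle of $F$. If $F$ is a tree with $\Delta(F) \ge 4$, any cubic $H$ of large treewidth excludes $F$ by a maximum-degree count. Otherwise $F$ is a tree with $\Delta(F) = 3$ and at least two degree-$3$ vertices; here I would subdivide each edge of a cubic expander $\mathrm{diam}(F)+1$ times. Since $G_0$ is a minor of its subdivision, $\tw(H) \ge \tw(G_0)$ still grows with the size of the expander, while every pair of degree-$3$ vertices of $H$ is now at distance exceeding $\mathrm{diam}(F)$, so that $F$ cannot embed. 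For $G$ I would take a graph of the same local flavour that contains $F$ (a larger expander, a high-degree graph, or an expander with shorter subdivisions, as the case demands). Duplicator's $\tw(F)$-pebble strategy then exploits the local tree-likeness of both graphs by matching pebbled substructures of $G$ to isomorphic substructures inside large locally-tree-like neighbourhoods of $H$; when $F$ is a tree, $\tw(F) = 1$ and the pebble game is essentially trivial.

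\textbf{Main obstacle.} The crux in both parts is verifying Duplicator's winning strategy. In Part~1 the non-clique cores are delicate: the role template for $H$ must destroy an edge of $F_0$ without enabling a sneaky copy of $F_0$ via the surviving roles, and Duplicator must carry both edges and non-edges of $F_0$ through the $(k-1)$-pebble game. In Part~2 the hardest case is trees with at least two degree-$3$ vertices, where the subdivision trick for $H$ must be balanced against a choice of $G$ that looks identical to $H$ locally but globally contains $F$; establishing that $\tw(F)$ pebbles really do not suffice for this coupled pair is the essential step.
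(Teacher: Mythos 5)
Your Part~1 construction has a genuine gap that you yourself flag but cannot repair within the chosen framework. A blow-up of a template (independent sets joined by complete bipartite graphs) contains $F_0$ as a subgraph precisely when $F_0$ admits a \emph{homomorphism} into the template, once the parts are large. Hence deleting or merging a role does not in general destroy $F_0$: for any bipartite core, e.g.\ $F_0=C_4$, a blow-up of a single edge already contains $C_4$, so every modified template with at least one surviving edge of $F_0$ still hosts $F_0$ (and, after your decorations, $F$). The blow-up route works only for homomorphism-rigid cores such as cliques, which is exactly the case already covered by $K(a,n)$ versus $K(a-1,n)$ in Theorem~\ref{thm:L_ab}. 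The paper's idea is different and is the missing ingredient: it does not encode the structure of $F_0$ in the host at all. It keeps a bare clique distinction, $K_{\ell_0}$ versus $K_{\ell_0-1}$ with $\ell_0=v(F_0)$, and attaches to every clique vertex a gadget of girth larger than $v(F)$ (a large uniform tree merged with a cubic graph of huge treewidth whose edges are subdivided $v(F)$ times). The high girth guarantees that no cycle of $F$, hence no part of $F_0$, can sit inside a gadget, so $F\subseteq G$ but $F\not\subseteq H$, while the subdivided cubic graphs supply arbitrarily large treewidth; the bound $W(G,H)\ge\ell_0$ then comes from the clique-size difference alone.

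Part~2 also contains an error in the non-tree case. You take $H$ cubic of girth exceeding the longest cycle of $F$ and claim Duplicator survives the $\tw(F)$-pebble game; but for $F=K_4$ one has $\tw(F)=3$, and with three pebbles Spoiler simply pebbles a triangle of $G$ (which exists, since $G\supseteq K_4$) and wins against any large-girth $H$. More generally the large-girth choice cannot work whenever $\tw(F)$ is at least the girth of $F$. The paper avoids any new construction here: for $F$ not a tree it deduces Part~2 from Part~1 via
$W_\tw(F)\ge v(F_0)\ge\tw(F_0)+1=\tw(F)+1$,
using that deleting vertices of degree at most $1$ does not change the treewidth and that treewidth is less than the number of vertices. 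Your treatment of the remaining tree cases (maximum degree at least $4$ excluded by cubic hosts; two degree-$3$ vertices excluded by subdividing a cubic expander, with only the trivial $1$-pebble game left to analyze) does match the paper and is fine, but the reduction of the cyclic case to Part~1 is the step you are missing, and it is what makes the theorem go through for graphs like $K_\ell$.
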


\noindent
Note that the bound in part 2 of Theorem \ref{thm:pendant-path-tw} is tight by Theorem~\ref{thm:L_ab}.

\begin{proof}
\textit{1.}
Denote $v(F)=\ell$ and $v(F_0)=\ell_0$.
If $F$ is a forest, then $\ell_0=0$, and the claim is trivial.
Suppose, therefore, that $F$ is not a forest. In this case, $\ell_0\ge3$.

We begin with a cubic graph $B$ of as large treewidth $\tw(B)$ as desired.
Let $(B)_\ell$ denote the graph obtained from $B$ by subdividing each edge by $\ell$ new vertices.
Since $B$ is a minor of $(B)_\ell$, we have $\tw((B)_\ell)\ge\tw(B)$; see~\cite{Diestel}.

Next, we construct a gadget graph $A$ as follows. By a \emph{$k$-uniform tree}
we mean a tree of even diameter where every non-leaf vertex has degree $k$ and all distances between
a leaf and the central vertex are equal. The graph $A$ is obtained by merging the $\ell$-uniform tree
of radius $\ell$ and $(B)_\ell$; merging is done by identifying one leaf of the tree and
one vertex of~$(B)_\ell$.

We now construct $G$ by attaching a copy of $A$ to each vertex of $K_{\ell_0}$.
Specifically, a copy $A_u$ of $A$ is created for each
vertex $u$ of $K_{\ell_0}$, and $u$ is identified with the central vertex of (the tree part of) $A_u$.
Let $H$ be obtained from $G$ by shrinking its clique part to $K_{\ell_0-1}$.
Since both $G$ and $H$ contain copies of $(B)_\ell$, these two graphs have
treewidth at least as large as $\tw(B)$.

The clique part of $G$ is large enough to host the core $F_0$, and the remaining tree shoots of $F$
fit into the $A$-parts of $G$. Therefore, $G$ contains $F$ as a subgraph.
On the other hand, the clique part of $H$ is too small for hosting $F_0$,
and no cycle of $F$ fits into any $A$-part because $A$ has larger girth than $F$.
Therefore, $H$ does not contain $F$.
It remains to notice that $W(G,H)\ge\ell_0$.

\textit{2.}
Suppose first that a connected graph $F$ is not a tree.
By part 1, we then have
$$
W_\tw(F)\ge v(F_0)\ge\tw(F_0)+1=\tw(F)+1.
$$
If $F$ is a tree and is not contained in any 3-megastar, that is,
has a vertex of degree more than 3 or at least two vertices of degree 3,
then there are connected graphs of arbitrarily large treewidth
that do not contain $F$ as a subgraph (for example, consider $(B)_\ell$ for a connected cubic
graph $B$ as in part 1). Trivially, there are also connected graphs of arbitrarily large treewidth
that contain $F$ as a subgraph. Since one pebble is not enough for Spoiler
to distinguish the latter from the former, we have
$W_\tw(F)\ge2=\tw(F)+1$ in this case.
\end{proof}

Theorem \ref{thm:pendant-path-tw} implies that $W_\tw(F)\ge\tw(F)$
for all $F$. Combining it with
the bound $W_\Arb(F)\le\tw(F)+3$ mentioned in Section \ref{s:intro},
we obtain the following relation.

\begin{corollary}\label{cor:arb-tw}
$W_\Arb(F)\le W_\tw(F)+3$.
\end{corollary}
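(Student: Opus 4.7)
The plan is to chain together two inequalities that are already on the table. From the color-coding upper bound \refeq{W-tw} recalled in the introduction we have $W_\Arb(F) \le \tw(F) + 3$. From the sentence immediately preceding the corollary, which extracts a consequence of Theorem \ref{thm:pendant-path-tw}, we have the combinatorial lower bound $W_\tw(F) \ge \tw(F)$. Substituting $\tw(F) \le W_\tw(F)$ into the first bound gives $W_\Arb(F) \le \tw(F) + 3 \le W_\tw(F) + 3$, which is the claim.

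The only piece of bookkeeping worth checking is that $W_\tw(F) \ge \tw(F)$ really holds for \emph{every} connected $F$, including the exceptional family excluded from Theorem \ref{thm:pendant-path-tw}(2). Part 2 of that theorem gives the strictly stronger inequality $W_\tw(F) \ge \tw(F) + 1$ unless $F$ is a subtree of some 3-megastar $M_{3,b}$. In the exceptional case $F$ is a tree, so $\tw(F) \le 1$; then $W_\tw(F) \ge 1$ is immediate, since distinguishing a graph containing an edge from one without any edge requires at least two pebbles whenever $F$ has an edge, and if $F = K_1$ both sides vanish. Thus the inequality $W_\tw(F) \ge \tw(F)$ is universal, and the corollary reduces to a one-line combination of earlier results; I do not expect any genuine obstacle.
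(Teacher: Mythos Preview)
Your argument is exactly the paper's: combine the bound $W_\Arb(F)\le\tw(F)+3$ from \refeq{W-tw} with the inequality $W_\tw(F)\ge\tw(F)$ recorded immediately before the corollary. The extra bookkeeping for the exceptional subtrees of $M_{3,b}$ is more than the paper spells out, though your ``two pebbles'' justification is misplaced (in the large-treewidth regime every connected graph in the domain already has edges, so there is nothing to distinguish); this does not affect the main chain of inequalities.
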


Note that $W_\Arb(F)$ and $W_\tw(F)$ are within a constant factor from each other for infinitely many $F$.
This is so for $F=K_\ell$ as  $W_\Arb(K_\ell)>\ell/4$ (Rossman \cite{Rossman08}).
On the other hand, a gap between the two parameters can be large.
For example, part 1 of Theorem \ref{thm:pendant-path-tw}
gives $W_\tw(C_\ell)=\ell$ whereas $W_\Arb(C_\ell)\le5$.

\section{Definitions over highly connected graphs}\label{s:conn}

In this section we prove a lower bound for $D_\kappa(F)$
in terms of the density of $F$.
The proof will use known facts about random graphs in the Erd\H{o}s-R\'enyi model $G(n,p)$,
collected below.

The \emph{density} of a graph $K$ is defined to be the ratio $\rho(K)=e(K)/v(K)$.
The maximum $\rho(K)$ over all subgraphs $K$ of a graph $F$ will be denoted by~$\rho^*(F)$.
The following fact from the random graph theory was used also in \cite{LiRR14}
for proving average-case lower bounds on the \ac0 complexity of \probl{Subgraph Isomorphism}.
 \emph{With high probability} means the probability approaching $1$ as $n\to\infty$.

\begin{lemma}[Subgraph Threshold, see {\cite[Chap.~3]{JLR-book}}]\label{lem:threshold}
If $\alpha=1/\rho^*(F)$, then the probability that $G(n,n^{-\alpha})$ contains $F$ as a subgraph
converges to a limit different from $0$ and~$1$ as $n\to\infty$.
\end{lemma}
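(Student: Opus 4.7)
The plan is to carry out the classical second moment and method of moments argument for subgraph counts in $G(n,p)$. Fix a densest subgraph $F^*\subseteq F$, chosen to be minimal among such (so that $F^*$ is strictly balanced: $\rho(J)<\rho(F^*)$ for every proper nonempty $J\subsetneq F^*$), and note that $\rho(F^*)=\rho^*(F)=1/\alpha$, while $p=n^{-\alpha}$. Every copy of $F$ in $G(n,p)$ contains a copy of $F^*$, so the strategy is first to pin down the limiting distribution of the number $X^*$ of $F^*$-copies and then to argue that the ``extension'' from $F^*$ to $F$ does not alter the asymptotic probability.

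First I would compute $\mathbb{E}[X^*]=\Theta(n^{v(F^*)}p^{e(F^*)})$; the choice $\alpha=v(F^*)/e(F^*)$ makes this tend to a positive constant $\lambda^*$. Then I would run the Poisson approximation via the method of moments: the factorial moments of $X^*$ decompose as sums over isomorphism types of unions of $k$ labeled $F^*$-copies, and the strict balance of $F^*$ forces every overlap pattern with nonempty intersection to contribute $o(1)$, leaving $\mathbb{E}[(X^*)_k]\to(\lambda^*)^k$. This gives $X^*\to\mathrm{Poisson}(\lambda^*)$ in distribution, hence $\Pr[F^*\subseteq G(n,p)]\to 1-e^{-\lambda^*}\in(0,1)$, which in particular yields the upper bound $\limsup\Pr[F\subseteq G(n,p)]\le 1-e^{-\lambda^*}<1$.

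The harder direction is the matching lower bound. If $F$ is itself balanced (so $F^*=F$ works) the same Poisson argument applied directly to $X_F$ already finishes the job. In the unbalanced case I would show that, conditional on a copy of $F^*$ being present, a copy of $F$ is present with probability $1-o(1)$: the expected number of extensions of a fixed $F^*$-copy to an $F$-copy is $\Theta(n^{v(F)-v(F^*)}p^{e(F)-e(F^*)})=\Theta(n^{v(F)-\alpha e(F)})$, which diverges precisely because $\rho^*(F)>\rho(F)$, and a second-moment calculation controls its variance. The main obstacle is this variance estimate: the key algebraic fact making it work is that $\alpha=1/\rho^*(F)$ forces $v(H)-\alpha e(H)\ge 0$ for every subgraph $H\subseteq F$, which keeps every intersection term in the variance bounded. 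Combining the Poisson limit for $X^*$ with the conclusion that extensions succeed whp produces the claimed nondegenerate limit $1-e^{-\lambda^*}\in(0,1)$.
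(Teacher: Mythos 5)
The paper itself offers no proof of this lemma---it is quoted from Janson--{\L}uczak--Ruci\'nski, Chapter~3---so the only question is whether your argument stands on its own, and it does not: your balanced/unbalanced dichotomy does not cover what actually happens at the threshold, and the claimed limit $1-e^{-\lambda^*}$ is false in general. First, ``balanced'' is not enough for the direct Poisson step: the method of moments needs $F$ to be \emph{strictly} balanced. For the paw $L_{3,1}$ (triangle plus a pendant edge) at $p=n^{-1}$ we have $\rho(F)=\rho^*(F)=1$, but pairs of paw copies sharing the triangle contribute $\Theta(1)$ to the second factorial moment, so $X_F$ is not asymptotically Poisson (its limit is compound-Poisson-like) and your computation of $\mathbb{E}[(X_F)_k]$ breaks down. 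Second, and more seriously, in the unbalanced case the claim ``conditional on a copy of $F^*$, a copy of $F$ is present with probability $1-o(1)$'' fails whenever $F$ contains a subgraph of density $\rho^*(F)$ that properly contains a copy of $F^*$. Take $F$ to be two vertex-disjoint copies of $K_4$ joined by a path of length $L\ge 4$: then $\rho^*(F)=3/2>\rho(F)$, $F^*=K_4$, and at $p=n^{-2/3}$ the number of $K_4$'s tends to a Poisson law with mean $1/24$; conditioned on one copy, the second disjoint copy that $F$ requires appears only with probability bounded away from $1$, so the limiting containment probability is essentially $\Pr[\mathrm{Po}(1/24)\ge 2]$, not $1-e^{-1/24}$.

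The precise point of failure is your variance step. The expected number of extensions of a fixed $F^*$-copy to an $F$-copy, $\Theta(n^{v(F)-\alpha e(F)})$, does diverge, but divergence of the mean gives ``whp'' only if the second moment is $(1+o(1))$ times the squared mean, which requires the \emph{strict} inequalities $f_\alpha(S,F^*)>0$ for every $S$ with $F^*\subsetneq S\subseteq F$---exactly the hypothesis of the paper's Generic Extension lemma (Lemma~\ref{lem:gen-ext}). Your weaker observation that $v(H)-\alpha e(H)\ge 0$ for all $H\subseteq F$ only keeps the variance of the same order as the squared mean, which via Paley--Zygmund yields a probability bounded below, not $1-o(1)$; and in the example above one has equality ($f_\alpha(S,F^*)=0$) for $S$ the union of the two $K_4$'s, so the extension genuinely is not whp. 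What survives of your proposal is the nondegeneracy: $\limsup<1$ follows from the Poisson limit for the strictly balanced $F^*$, and $\liminf>0$ follows from a second-moment bound on $X_F$ itself using $v(H)-\alpha e(H)\ge 0$. The existence and identification of the limit, however, requires analysing the whole family of maximum-density subgraphs of $F$ (clusters of several $F^*$-copies), as is done in the cited source, and cannot be reduced to a single minimal densest subgraph in the way you propose.
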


Let $\alpha>0$. Given a graph $S$ and its subgraph $K$,
we define $f_{\alpha}(S,K)=v(S)-v(K)-\alpha(e(S)-e(K))$.

\begin{lemma}[Generic Extension, see  {\cite[Chap.~10]{AlonS16}}]\label{lem:gen-ext}
Let $F$ be a graph with vertices $v_1,\ldots,v_\ell$ and $K$ be a subgraph of $F$
with vertices $v_1,\ldots,v_k$. Assume that $f_{\alpha}(S,K)>0$ for every subgraph $S$ of $F$
containing $K$ as a proper subgraph. Then with high probability
every sequence of pairwise distinct vertices $x_1,\ldots,x_k$ in $G(n,n^{-\alpha})$
can be extended with pairwise distinct $x_{k+1},\ldots,x_\ell$ such that
$x_i\sim x_j$ if and only if $v_i\sim v_j$ for all $i\le\ell$ and $k<j\le\ell$.
\end{lemma}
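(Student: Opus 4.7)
The plan is the standard extension-counting argument in $G(n, p)$: fix a base tuple, apply a concentration inequality to the number of extensions, and union-bound over base tuples. Concretely, fix an ordered $k$-tuple $\bar x = (x_1, \ldots, x_k)$ of distinct vertices in $G(n, n^{-\alpha})$, and let $X(\bar x)$ be the number of ordered $(\ell-k)$-tuples $(x_{k+1}, \ldots, x_\ell)$ of pairwise distinct new vertices (also distinct from $\bar x$) whose edges to $\bar x$ and among themselves include all the required pairs $v_iv_j \in E(F)$ with $k<j\le\ell$. The target is to show that $\Pr[X(\bar x) = 0] = n^{-\omega(1)}$ uniformly in $\bar x$; a union bound over the $O(n^k)$ base tuples then proves that every $\bar x$ admits an edge-valid extension with high probability. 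Because $p = n^{-\alpha} \to 0$ and each candidate extension has only $O(1)$ ``forbidden'' non-edges, a separate and much easier expectation bound rules out spurious edges in all such extensions simultaneously, delivering the non-adjacency part of the conclusion.

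For the main task, a direct count yields $\mathbb{E}[X(\bar x)] = \Theta(n^{\ell-k}\, p^{e(F)-e(K)}) = \Theta(n^{f_\alpha(F, K)})$, which is a positive power of $n$ by the hypothesis applied with $S = F$. For the second moment, I would expand $\mathbb{E}[X(\bar x)^2]$ as a sum over pairs of extensions, grouped by the subgraph $S$ with $K \subseteq S \subseteq F$ that records which ``new'' vertices the two extensions share. A routine computation shows that the contribution of each overlap type $S$ is $\Theta\bigl(\mathbb{E}[X(\bar x)]^2 \cdot n^{-f_\alpha(S, K)}\bigr)$, so the hypothesis that $f_\alpha(S, K) > 0$ for every $S \supsetneq K$ makes every off-diagonal term strictly subdominant. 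Setting $\delta = \min_{K \subsetneq S \subseteq F} f_\alpha(S, K) > 0$, a minimum taken over finitely many positive reals, yields $\mathrm{Var}(X(\bar x)) = O(\mathbb{E}[X(\bar x)]^2 \cdot n^{-\delta})$.

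The main obstacle is upgrading the resulting Chebyshev estimate $\Pr[X(\bar x) = 0] = O(n^{-\delta})$ to a super-polynomial bound, since the union bound has to defeat a factor of $n^k$ and $\delta$ need not exceed $k$. I would invoke Janson's inequality applied to the monotone increasing indicator events ``the ordered tuple $(x_{k+1}, \ldots, x_\ell)$ is an edge-valid extension''; the relevant dependency sum $\Delta$ decomposes by overlap type exactly as in the variance calculation above, and Janson then gives $\Pr[X(\bar x) = 0] \leq \exp\!\bigl(-\Omega(\mathbb{E}[X(\bar x)]^2/(\mathbb{E}[X(\bar x)] + \Delta))\bigr) \leq \exp(-n^{\epsilon})$ for some fixed $\epsilon > 0$. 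This comfortably beats the $n^k$ factor from the union bound, and together with the non-edge expectation bound mentioned in the first paragraph proves the lemma.
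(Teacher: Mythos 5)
The paper itself does not prove this lemma (it is quoted from \cite{AlonS16}), so the comparison is with the standard proof from the literature. Your first-and-second-moment setup, the decomposition of $\Delta$ by overlap types $S$ with $K\subsetneq S\subseteq F$, the bound $\Pr[X(\bar x)=0]\le\exp\bigl(-\Omega(\min(\mathbb{E}X,\,(\mathbb{E}X)^2/\Delta))\bigr)$ from Janson's inequality, and the union bound over the $O(n^k)$ root tuples are indeed the standard route, and they correctly establish that with high probability every root tuple admits an \emph{edge-valid} extension (all required adjacencies present).

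The genuine gap is the non-adjacency half of the conclusion, which you dismiss in one sentence. The lemma requires $x_i\sim x_j$ \emph{if and only if} $v_i\sim v_j$, so the chosen extension must avoid every spurious edge among the new vertices and between new vertices and roots. Your claim that an easy expectation bound ``rules out spurious edges in all such extensions simultaneously'' is false in general: for a fixed root tuple, the number of extensions carrying one particular extra edge is governed by the augmented rooted graph $(K,F+e)$ and has expectation $\Theta\bigl(n^{f_\alpha(F,K)-\alpha}\bigr)$, which is polynomially \emph{large} whenever $f_\alpha(F,K)>\alpha$ (as it is, for instance, in this paper's own application of the lemma); in that regime spurious extensions exist in abundance for essentially every root tuple and cannot be ruled out. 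What is actually needed is that for \emph{every} root tuple the number of extensions with at least one spurious edge is strictly smaller than the number of edge-valid extensions. In expectation this holds with a factor $p=n^{-\alpha}$ to spare, but to make it hold for all $\Theta(n^k)$ tuples simultaneously you need a per-tuple \emph{upper-tail} bound with failure probability $o(n^{-k})$ for the counts of the denser augmented extensions: Markov gives only $O(n^{-\alpha})$ per tuple, which the union bound cannot absorb, and Janson provides lower tails only and cannot be applied to the non-monotone ``exact extension'' events. Supplying this upper-tail control is precisely the hard part of the Generic Extension lemma; the standard proofs obtain it from Spencer's counting-extensions theorem for safe rooted graphs (with the safe/rigid/hinged analysis of \cite[Chap.~10]{AlonS16}) or from polynomial concentration inequalities of Kim--Vu type, neither of which your sketch replaces.
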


\begin{lemma}[Zero-One $d$-Law \cite{Zhuk}]\label{lem:d-law}
Let $0<\alpha<\frac1{d-2}$, and $\Psi$ be a first-order statement of quantifier depth $d$.
Then the probability that $\Psi$ is true on $G(n,n^{-\alpha})$ converges either to $0$
or to $1$ as $n\to\infty$.
\end{lemma}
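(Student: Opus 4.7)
The plan is to deduce the zero--one law from the Ehrenfeucht--Fra{\"\i}ss{\'e} game characterization by showing that two independent samples $G,G'$ of $G(n,n^{-\alpha})$ are first-order $d$-equivalent with high probability. Writing $p_n=\Pr[G\models\Psi]$, the event that $G$ and $G'$ disagree on $\Psi$ has probability $2p_n(1-p_n)$, and two $d$-equivalent graphs must agree on $\Psi$; so once this disagreement probability tends to zero, we obtain $p_n(1-p_n)\to 0$, which forces $p_n\to 0$ or $p_n\to 1$.

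First I would design a Duplicator strategy around the following invariant: after $k$ rounds, with pebbles $(x_1,\ldots,x_k)$ in $G$ and $(y_1,\ldots,y_k)$ in $G'$, the map $x_i\mapsto y_i$ is a partial isomorphism, and both pebbled tuples are ``generically extendable'' in the sense that, for every rooted graph $F$ on $\ell\le d$ vertices with the appropriate $k$-vertex root $K$, every intermediate subgraph $S$ with $K\subsetneq S\subseteq F$ satisfies $f_\alpha(S,K)>0$, and hence Lemma~\ref{lem:gen-ext} supplies compatible extensions of the pebbled tuple to a realization of $F$ in both $G$ and $G'$. At each round Duplicator responds to Spoiler's move by invoking a single-vertex extension of this form; the invariant then passes to round $k+1$ because the extension lemma is applied with the same $\alpha$ and a strictly smaller ``budget'' of $d-k-1$ remaining vertices.

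The key observation is that the range $\alpha<1/(d-2)$ is exactly what keeps the density hypothesis $f_\alpha(S,K)>0$ available throughout the game. Any intermediate $S$ with $v(S)-v(K)=j\le d-k$ added vertices and $e(S)-e(K)$ added edges satisfies $f_\alpha(S,K)>0$ precisely when the average density of the new edges is below $1/\alpha$; since both $v(S)$ and $v(K)$ are bounded by $d$, the subgraphs that can actually appear a.a.s.\ in $G(n,n^{-\alpha})$ (by Lemma~\ref{lem:threshold}) have maximum density below $1/\alpha$, and the strict inequality $\alpha(d-2)<1$ provides the slack that makes the extension hypothesis hold for every such $(S,K)$ pair. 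Configurations that would violate it have density above $1/\alpha$ and therefore almost surely do not occur, so they never need to be matched.

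The main technical obstacle is the bookkeeping: one must verify that, with probability $1-o(1)$, the invariant is simultaneously available for every isomorphism type of rooted extension on at most $d$ vertices that Spoiler could try to exploit, and that Duplicator's choice at each round preserves it. Since the number of such types depends only on $d$, this reduces to a union bound over a fixed finite family of almost-sure statements supplied by Lemmas~\ref{lem:threshold} and~\ref{lem:gen-ext}. Once the invariant survives all $d$ rounds with probability $1-o(1)$, Duplicator wins the pebble game w.h.p., $G$ and $G'$ agree on $\Psi$ w.h.p., and the zero--one law follows.
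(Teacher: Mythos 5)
The paper does not prove this lemma at all --- it is imported from Zhukovskii's paper cited as [Zhuk] --- so your proposal has to stand on its own, and it has a genuine gap at its central step. The claim that $\alpha<\frac1{d-2}$ keeps $f_\alpha(S,K)>0$ for every rooted extension on at most $d$ vertices is false. Take $K$ to be the $d-1$ vertices pebbled after round $d-1$ and let $S$ add one vertex adjacent to all of them: then $f_\alpha(S,K)=1-\alpha(d-1)$, which is negative throughout the subrange $\frac1{d-1}<\alpha<\frac1{d-2}$ (and two-vertex extensions are dense for even smaller $\alpha$). Your attempt to dismiss such configurations via Lemma~\ref{lem:threshold} conflates two different densities: the \emph{unrooted} subgraph density, which for any graph on at most $d$ vertices is at most $(d-1)/2\le d-2<1/\alpha$, so every such configuration does occur a.a.s.\ somewhere in $G(n,n^{-\alpha})$; and the \emph{rooted} extension density $f_\alpha(S,K)$, which governs whether it occurs over a prescribed root tuple. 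In the range $\frac1{d-1}\le\alpha<\frac1{d-2}$ a typical $(d-1)$-tuple has no common neighbour, yet many tuples (for instance $d-1$ neighbours of a fixed vertex) do have one; Spoiler can pebble such a tuple in $G$, while your Duplicator, who only ever invokes Lemma~\ref{lem:gen-ext}, produces a generic tuple in the other graph and loses in round $d$. Coping with these dense extensions --- and, for rational $\alpha$, with the critical pairs $f_\alpha(S,K)=0$, whose counts over a fixed root have a non-degenerate limiting distribution --- is precisely the content of Zhukovskii's theorem; the strategy you describe only reproves the classical easy range $\alpha<\frac1{d-1}$, where the full level-$d$ extension property holds a.a.s., and does not reach the stated bound $\frac1{d-2}$.

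A second, smaller gap is the reduction at the start: playing on two independent samples of the \emph{same} size $n$ only yields $\Pr[G\models\Psi]\bigl(1-\Pr[G\models\Psi]\bigr)\to0$, which still allows the probability to oscillate between values near $0$ and values near $1$, so convergence does not follow. The standard remedy is to play the $d$-round game on independent copies $G(n,n^{-\alpha})$ and $G(m,m^{-\alpha})$ and show Duplicator wins a.a.s.\ as $n$ and $m$ tend to infinity independently; only then does agreement force a single limit in $\{0,1\}$.
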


We are now ready to prove our result.

\begin{theorem}\label{thm:e/v}
If $e(F)>v(F)$, then $D_\kappa(F)\ge\frac{e(F)}{v(F)}+2$.
\end{theorem}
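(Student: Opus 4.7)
The plan is to argue by contradiction using the random graph $G(n, n^{-\alpha})$ for a carefully chosen $\alpha$, combining Lemmas \ref{lem:threshold} and \ref{lem:d-law}. Suppose for contradiction that $D_\kappa(F) \le d$ for some integer $d < \frac{e(F)}{v(F)} + 2 = \rho(F) + 2$. By definition of $D_\kappa$, there exist a threshold $k$ and a first-order sentence $\Phi$ of quantifier depth $d$ such that $\Phi$ defines \subgr F on all connected graphs of connectivity at least $k$.

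Set $\alpha = 1/\rho^*(F)$ so that Lemma \ref{lem:threshold} applies. Since $\rho^*(F) \ge \rho(F) > d - 2$, we have $\alpha < 1/(d-2)$, so Lemma \ref{lem:d-law} yields that $\Pr[G(n, n^{-\alpha}) \models \Phi]$ converges either to $0$ or to $1$. On the other hand, Lemma \ref{lem:threshold} asserts that $\Pr[F \subseteq G(n, n^{-\alpha})]$ converges to a limit $p \in (0, 1)$. Because $\alpha < 1$, the edge probability $n^{-\alpha}$ is far above the $k$-connectivity threshold of $G(n,p)$, so $G(n, n^{-\alpha})$ is connected and even $k$-connected with probability tending to $1$. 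On this asymptotically almost sure event, $\Phi$ is equivalent to $F \subseteq G(n, n^{-\alpha})$, so $\Pr[G(n, n^{-\alpha}) \models \Phi]$ also converges to $p \in (0,1)$, contradicting the zero-one law. Hence $D_\kappa(F) \ge \rho(F) + 2$.

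The main technical point to verify is that $G(n, n^{-\alpha})$ is $k$-connected asymptotically almost surely, which is what allows the threshold behavior of the subgraph-containment event to be transferred to the sentence $\Phi$. This is a standard random graph fact: for $p = n^{-\alpha}$ with $\alpha < 1$, the minimum degree of $G(n,p)$ grows with $n$ and the graph is $k$-connected with probability tending to $1$ for every fixed $k$. Apart from this, the proof is just an assembly of the cited lemmas, and Lemma \ref{lem:gen-ext} on generic extensions is not needed here. Note that the argument in fact yields the slightly stronger bound $D_\kappa(F) \ge \rho^*(F) + 2$, from which the theorem follows via $\rho^*(F) \ge \rho(F)$; the hypothesis $e(F) > v(F)$ is used only to ensure that the stated bound $\rho(F) + 2$ is nontrivial (larger than $3$).
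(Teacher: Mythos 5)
Your proposal is correct and follows essentially the same route as the paper: take $G(n,n^{-\alpha})$ with $\alpha=1/\rho^*(F)$, note it is $k$-connected with high probability, transfer the non-trivial limiting probability of containing $F$ (Lemma~\ref{lem:threshold}) to the defining sentence $\Phi$, and contradict the zero-one $d$-law (Lemma~\ref{lem:d-law}), which forces $d\ge\rho^*(F)+2\ge e(F)/v(F)+2$. The only difference is cosmetic: you certify $k$-connectivity by the standard connectivity-threshold fact for $G(n,p)$ with $p=n^{-\alpha}$, $\alpha<1$, whereas the paper derives it from the Generic Extension lemma (Lemma~\ref{lem:gen-ext}); both are valid, though note that the hypothesis $e(F)>v(F)$ is what guarantees $\alpha<1$ and hence the connectivity step, not merely the nontriviality of the bound.
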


\begin{proof}
By Lemma \ref{lem:threshold}, $\lim_{n\to\infty}\prob{\rgraph\in\subgr F}$ exists
and equals neither $0$ nor $1$.
Assume that a first-order sentence $\Phi$ of quantifier depth $d$
defines $\subgr F$ over $k$-connected graphs for all $k\ge k_0$.
We have to prove that $d\ge\frac{e(F)}{v(F)}+2$, whatever~$k_0$.

By the assumption of the theorem, $\rho^*(F)\ge\rho(F)>1$.
Fix $k$ such that $1+1/k<\rho(F)$ and $k\ge k_0$.
Lemma \ref{lem:gen-ext} implies
that with high probability every two vertices in $\rgraph$ can be connected
by $k$ vertex-disjoint paths (of length $k$ each). Therefore, $\rgraph$ is $k$-connected with high probability.

Since $\Phi$ correctly decides the existence of a subgraph $F$ on all $k$-connected graphs,
$$
\prob{\rgraph\models\Phi}=\prob{\rgraph\in\subgr F}+o(1).
$$
Therefore, $\prob{\rgraph\models\Phi}$ converges to the same limit as $\prob{\rgraph\in\subgr F}$,
which is different from $0$ and $1$. By Lemma \ref{lem:d-law}, this implies that
$\alpha\ge\frac1{d-2}$. From here we conclude that
$$
d\ge\rho^*(F)+2\ge\frac{e(F)}{v(F)}+2,
$$
as required.
\end{proof}

\section{Small pattern graphs}\label{s:four}

Now we aim at determining \emph{exact} values of the depth and the width parameters
for small connected pattern graphs. There are only two connected graphs with 3
vertices, the path graph $P_3$ and the complete graph $K_3$.
Since every connected graph with at least 3 vertices contains a subgraph $P_3$,
we have $D_v(P_3)=1$. Theorem \ref{thm:L_ab} in the case of $L_{a,0}=K_a$ gives us
\begin{equation}
  \label{eq:compl}
W_\kappa(K_a)=a
\end{equation}
for every $a\ge3$. In particular, $W_\kappa(K_3)=3$.

In this section, we consider the six connected graphs with 4 vertices, that are shown
in Table~\ref{fig:small}.
Each patern graph $F$ is presented in this table by a row
consisting of two layers, the upper for the depth parameters and the lower for the width
parameters. Note that the values within each row are monotonically non-decreasing
in the right and right upward directions. To improve visual clarity of the table,
we remove all entries whose values are implied by monotonicity.

\begin{table}[h!]
\begin{center}
\begin{tabular}{||cc|c||%
l@{\extracolsep{-8mm}}r|@{\extracolsep{1mm}}l@{\extracolsep{-8mm}}r|@{\extracolsep{1mm}}l@{\extracolsep{-8mm}}r||}
\hline
\multicolumn{3}{||c||}{\multirow{2}{*}{$F$}}
                     &              & $D_\kappa(F)$ &            & $D_\tw(F)$ &          & $D_v(F)$\\
\multicolumn{3}{||c||}{}
                     & $W_\kappa(F)$ &              &  $W_\tw(F)$ &           & $W_v(F)$ &         \\
\hline\hline
\multirow{2}{*}{
\begin{tikzpicture}[every node/.style=vertex,scale=.4]
\path (0,0) node (a) {}
      (0,1) node (b)  {} edge (a)
      (1,0) node (c)  {} edge (b)
      (1,1) node (d)  {} edge (c);
\end{tikzpicture}
}&\multirow{2}{*}{(path)}&
\multirow{2}{*}{$P_4$}&             &              &            &  1        &          & 3   \\
&&
                      &             &              &            &           &  2       &     \\
\hline
\multirow{2}{*}{
\begin{tikzpicture}[every node/.style=vertex,scale=.4]
\path (0,0) node (a) {}
      (-.5,1) node (b)  {} edge (a)
      (0,1) node (c)  {} edge (a)
      (.5,1) node (d)  {} edge (a);
\end{tikzpicture}
}&\multirow{2}{*}{(claw)}&
\multirow{2}{*}{$K_{1,3}$}&         &              &            &  1        &           &  4  \\
&&
                     &             &              &            &           &  3        &     \\
\hline
\multirow{2}{*}{
\begin{tikzpicture}[every node/.style=vertex,scale=.4]
\path (0,0) node (a) {}
      (0,1) node (b)  {} edge (a)
      (.5,.5) node (c)  {} edge (a) edge (b)
      (1.2,.5) node (d)  {} edge (c);
\end{tikzpicture}
}&\multirow{2}{*}{(paw)}&
\multirow{2}{*}{$L_{3,1}$}&         &              &            &           &          &  3   \\
&&
                     & 3           &              &            &           &          &      \\
\hline
\multirow{2}{*}{
\begin{tikzpicture}[every node/.style=vertex,scale=.4]
\path (0,0) node (a) {}
      (0,1) node (b)  {} edge (a)
      (1,1) node (c)  {} edge (b)
      (1,0) node (d)  {} edge (c) edge (a);
\end{tikzpicture}
}&\multirow{2}{*}{(cycle)}&
\multirow{2}{*}{$C_4$}&            &   4          &            &           &          &      \\
&&
                     & $\ge3$      &              & 4          &           &          &      \\
\hline
\multirow{2}{*}{
\begin{tikzpicture}[every node/.style=vertex,scale=.45]
\path (0,0) node (a) {}
      (0,1) node (b)  {}
      (-.5,.5) node (c)  {} edge (a) edge (b)
      (.5,.5) node (d)  {} edge (a) edge (b) edge (c);
\end{tikzpicture}
}&\multirow{2}{*}{(diamond)}&
\multirow{2}{*}{$K_4\setminus e$}& &  4           &            &           &          &      \\
&&
                     & $\ge3$      &              & 4          &           &          &      \\
\hline
\multirow{2}{*}{
\begin{tikzpicture}[every node/.style=vertex,scale=.35]
\path (0,0) node (a) {}
      (90:1cm) node (b)  {} edge (a)
      (210:1cm) node (c)  {} edge (a) edge (b)
      (-30:1cm) node (d)  {} edge (a) edge (b) edge (c);
\end{tikzpicture}
}&\multirow{2}{*}{(complete)}&
\multirow{2}{*}{$K_4$}&            &              &            &           &          &      \\
&&
                     &  4          &              &            &           &          &      \\
\hline
\end{tabular}
\end{center}
\vspace*{-5mm}
\caption{Results on 4-vertex subgraphs.}\label{fig:small}
\end{table}

We begin with several simple observations. First, if a connected graph $H$
has $n>3$ vertices and does not contain $P_4$, then $H=K_{1,n-1}$.
Second, if a connected graph $H$
has $n$ vertices and does not contain $K_{1,3}$, that is,
the maximum vertex degree of $H$ does not exceed 2, then $H=P_n$ or $H=C_n$.
In each case, $H$ has treewidth at most 2. It readily follows that
$D_\tw(P_4)=1$ and $D_\tw(K_{1,3})=1$.

Moreover, the simple structure of connected $K_{1,3}$-free graphs easily implies
that $W_v(K_{1,3})\allowbreak=3$ and $D_v(K_{1,3})=4$.
The lower bounds follow here from the inequalities
$W(M_{3,b},P_n)>2$ and $D(M_{3,b},P_n)>3$ that are true for all $b\ge2$ and $n\ge5$.
To see the upper bound $W_v(K_{1,3})\le3$, consider the 3-pebble Ehrenfeucht-Fra{\"\i}ss{\'e} game
on $G$ with a vertex of degree at least 3 and $H=P_n$ or $H=C_n$ with $n>6$.
In the first three rounds, Spoiler pebbles three vertices in $G$ having a common neighbor.
If Duplicator is still alive then, whatever she responds, two of her vertices in $H$
are at the distance more than 2. This allows Spoiler to win in the next round.

By \refeq{compl}, we have $W_\kappa(K_4)=4$.
As another direct consequence of Theorem \ref{thm:L_ab},
$W_\kappa(L_{3,1})=D_v(L_{3,1})=3$. Here, the upper bound $D_v(L_{3,1})\le3$
follows from the observation that a connected graph with at least 4 vertices
contains a subgraph $L_{3,1}$ if and only if it contains a subgraph~$K_3$.

If $F$ is 2-connected,
part 1 of Theorem \ref{thm:pendant-path-tw} implies that $W_\tw(F)=v(F)$.
This applies to the 4-cycle and the diamond graph, and we have
the equalities $W_\tw(C_4)=4$ and $W_\tw(K_4\setminus e)=4$.
Furthermore, $D_\kappa(K_4\setminus e)=4$ as a consequence of Theorem~\ref{thm:e/v}.
The lower bound $W_\kappa(K_4\setminus e)\ge3$ can be seen by considering,
like in the proof of Theorem \ref{thm:L_ab}, the complete multipartite graphs
$G=K(3,n)$ and $H=K(2,n)$.

The other entries of Table~\ref{fig:small} are not so obvious.

\begin{theorem}\label{thm:four}
\mbox{}

\begin{enumerate}
\item
$W_v(P_4)=2$ and $D_v(P_4)=3$.
\item
$D_\kappa(C_4)=4$ and $W_\kappa(C_4)\ge3$.
\end{enumerate}
\end{theorem}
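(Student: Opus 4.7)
The plan rests on the following structural observation: a connected graph $H$ on $n \ge 4$ vertices fails to contain $P_4$ as a subgraph if and only if $H$ is the star $K_{1, n-1}$. A short BFS-depth argument (any $P_4$-free connected graph has BFS radius at most $2$ from every vertex, which, for $n \ge 4$, forces the star structure) establishes this. I would refine it into a two-case dichotomy: such an $H$ is non-star if and only if either (a) $H$ has at least two universal vertices (vertices adjacent to all others), or (b) $H$ has an edge both of whose endpoints are non-universal --- because otherwise the non-universal vertices form an independent set reachable only through at most one universal vertex, which forces $H$ to be a star.

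For Part~1, I would realise both upper bounds simultaneously via a single $2$-variable, quantifier-depth-$3$ sentence $\Phi = \Phi_1 \lor \Phi_2$ with
\[
\Phi_1 = \exists x\, \exists y\, \bigl(x \neq y \land \forall y(y = x \lor E(x, y)) \land \forall x(x = y \lor E(y, x))\bigr)
\]
asserting ``there are two universal vertices'' and
\[
\Phi_2 = \exists x\, \exists y\, \bigl(E(x, y) \land \exists y(y \neq x \land \neg E(x, y)) \land \exists x(x \neq y \land \neg E(x, y))\bigr)
\]
asserting ``there is an edge between two non-universal vertices''. Each subformula uses only $x, y$ via reuse, and by the dichotomy $\Phi$ defines $\subgr{P_4}$ on connected graphs with $\ge 4$ vertices, giving $W_v(P_4) \le 2$ and $D_v(P_4) \le 3$ at once. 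The matching lower bounds come from $G = P_n$ and $H = K_{1, n-1}$: a $1$-variable sentence is sensitive only to cardinality, giving $W_v(P_4) \ge 2$ via Lemma~\ref{lem:DD}; for $D_v(P_4) \ge 3$, Duplicator wins the $2$-round $2$-pebble game by matching $2$-types (both graphs realise the adjacent and non-adjacent types from every vertex).

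For Part~2, I would take $G = K_{m, m} \setminus M$, the complete bipartite graph with a perfect matching $M$ removed, and $H$ the point--line incidence graph of the projective plane $\mathrm{PG}(2, q)$. As $m, q \to \infty$, both have arbitrarily large vertex connectivity; $G$ contains $C_4$, and $H$ has girth $6$, so is $C_4$-free. The upper bound $D_\kappa(C_4) \le 4$ is immediate from $D_\kappa(F) \le D_v(F) \le v(F)$. For $D_\kappa(C_4) \ge 4$, I would verify that $G$ and $H$ realise precisely the same set of three-vertex configurations --- the independent triple, the path $P_3$, and ``edge~$+$~isolated'' (the last occurring in $G$ via matching partners, and in $H$ via a line incident to exactly one of the two pebbled points) --- and then use the projective-plane axioms (two points lie on a unique line, $q + 1$ lines through each point, $q + 1$ points per line, and dually) to give a Duplicator strategy that, in the $3$-round $3$-pebble game, always finds a third-pebble response matching Spoiler's adjacency pattern. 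For $W_\kappa(C_4) \ge 3$ the same pair works: in the $2$-pebble game only two pebbles are ever on the board, so Duplicator need only maintain the $2$-type, which is routine since both graphs are vertex-transitive and have many adjacent and non-adjacent partners for every vertex.

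The main technical obstacle is the third-round extension analysis for $D_\kappa(C_4) \ge 4$: case by case, depending on in which graph Spoiler pebbles and on the adjacency pattern of his first two pebbles, Duplicator must produce a matching third vertex. This depends crucially on the choice of $G = K_{m, m} \setminus M$ rather than $K_{m, m}$, so that the ``edge~$+$~isolated'' 3-type present in $H$ (a line incident to exactly one of two pebbled points) also occurs in $G$ (via matched pairs). For Part~1, the key insight enabling both the width and the depth upper bound to be carried by a single sentence is the two-case characterisation of non-stars in the first paragraph.
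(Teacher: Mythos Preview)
Your upper-bound work in Part~1 is correct and in fact cleaner than the paper's: the dichotomy ``non-star iff two universal vertices or an edge between two non-universal vertices'' yields a single width-$2$, depth-$3$ sentence, whereas the paper argues $D_v(P_4)\le3$ and $W_v(P_4)\le2$ separately via Ehrenfeucht--Fra\"\i ss\'e strategies. The $W_v(P_4)\ge2$ and $W_\kappa(C_4)\ge3$ lower bounds are also fine.

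However, two of your lower-bound constructions fail.

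\smallskip
\textbf{Part~1, $D_v(P_4)\ge3$.} Your claim that in $H=K_{1,n-1}$ ``both the adjacent and non-adjacent types are realised from every vertex'' is false: the centre of the star is universal, so it has no non-neighbour. Spoiler wins in two rounds by pebbling the centre of $H$; Duplicator must answer with some $x_1\in P_n$, and then Spoiler pebbles any $x_2$ non-adjacent to $x_1$ (there is one, since $n\ge4$), to which Duplicator has no legal reply. Hence $D(P_n,K_{1,n-1})=2$, and this pair does not witness the bound. The paper instead takes $G=J_{3,n}$ (the triangle with $n$ pendants at one vertex), which \emph{does} have a universal vertex and therefore matches the star at depth~$2$.

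\smallskip
\textbf{Part~2, $D_\kappa(C_4)\ge4$.} Realising the same set of $3$-vertex configurations is not sufficient; what is needed is that every $2$-configuration in one graph has a counterpart in the other with the \emph{same} set of one-vertex extensions. Your pair fails this. In $H$ let Spoiler pebble $y_1=$ a point and $y_2=$ a line not through $y_1$; these are non-adjacent and have no common neighbour (a common neighbour would have to be simultaneously a line and a point), yet they have many common non-neighbours (e.g.\ any point $\ne y_1$ off $y_2$). Duplicator must choose a non-adjacent pair $x_1,x_2$ in $G=K_{m,m}\setminus M$. If $x_1,x_2$ lie on the same side, they have $m-2$ common neighbours, and Spoiler wins by pebbling one of them. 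If $x_1,x_2$ are a matched pair, then \emph{every} other vertex of $G$ is adjacent to exactly one of them, so they have no common non-neighbour; Spoiler wins by pebbling a common non-neighbour of $y_1,y_2$ in $H$. Either way Spoiler wins in three rounds, so $D(G,H)\le3$. (Concretely, the depth-$3$ sentence ``there exist distinct non-adjacent $x,y$ with no common neighbour and no common non-neighbour'' is true of $G$ via a matched pair, but false of $H$.) The paper avoids this obstruction by using random $k$-regular graphs of girth $4$ and $>4$; the asymmetry of the girth-$4$ graph is what handles the delicate third-round case.
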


The proof of Theorem \ref{thm:four} takes the rest of this section.

\paragraph{The path subgraph ($P_4$).}

We restate part 1 of Theorem \ref{thm:four} as two lemmas.

\begin{lemma}
$D_v({P_4})=3$.
\end{lemma}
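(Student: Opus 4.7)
The plan is to prove $D_v(P_4)\le 3$ and $D_v(P_4)\ge 3$ separately, working directly with the pebble game rather than writing down an explicit formula. Recall the elementary fact stated just before the lemma: every connected graph with at least $4$ vertices that does not contain $P_4$ is a star $K_{1,n-1}$. So the upper bound reduces to showing that Spoiler wins the $3$-round $3$-pebble game on every sufficiently large connected $G\in\subgr{P_4}$ versus $H=K_{1,n-1}$.

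For the upper bound we fix a copy of $P_4$ on vertices $x_1x_2x_3x_4$ in $G$ and let Spoiler open by pebbling the internal vertex $x_2$; call Duplicator's response $y_2\in H$. If $y_2$ is a leaf of $H$, its only neighbour is the centre, so pebbling $x_1$ next forces Duplicator to reply with the centre $y_1$, after which Spoiler pebbles $x_3$ and Duplicator has no neighbour of $y_2$ distinct from $y_1$ and $y_2$. If $y_2$ is the centre, Spoiler pebbles $x_3$ (Duplicator must respond with some leaf $y_3$) and then $x_4$; since $x_3\sim x_4$, Duplicator needs a neighbour of $y_3$ different from $y_2$, but the only neighbour of a leaf is the centre itself, already occupied by $y_2$. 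Either way Spoiler wins in $3$ rounds.

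For the lower bound we will exhibit, for each large $n$, a pair of connected $n$-vertex graphs $(G_n,H_n)$ with $G_n\in\subgr{P_4}$, $H_n\notin\subgr{P_4}$, and $D(G_n,H_n)\ge 3$; Lemma~\ref{lem:DD}(1) will then give $D_v(P_4)\ge 3$. We take $H_n=K_{1,n-1}$ and let $G_n$ be the \emph{near-star} obtained from $K_{1,n-1}$ by adding a single edge $xy$ between two of its leaves. Then $G_n$ is connected, contains $P_4$, and crucially has a universal vertex $c_G$ that Duplicator can match to the centre $c_H$ of $H_n$. Duplicator's $2$-round strategy will match centres with centres, and when Spoiler picks a non-centre, respond with a pure leaf (in $H_n$, any leaf; in $G_n$, any leaf outside $\{x,y\}$, which exists for $n\ge 4$). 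A short case analysis on Spoiler's second move confirms that the partial isomorphism is preserved; the only delicate case is when Spoiler pebbles $x$ and then $y$, where Duplicator answers by a leaf of $H_n$ followed by $c_H$, an adjacent pair.

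The harder half will be the lower bound, because the choice of $G_n$ is not transparent. For instance, the naive choice $G_n=C_n$ already fails: Spoiler wins in two rounds by first pebbling $c_H$ in $H_n$ (forcing Duplicator to commit to some vertex $v$ of $C_n$) and then pebbling a non-neighbour of $v$ in $C_n$, which leaves Duplicator unable to find a non-neighbour of $c_H$ in the star. Endowing $G_n$ with a universal vertex, as in the near-star construction, is exactly what neutralises this strategy and will give $D(G_n,H_n)\ge 3$.
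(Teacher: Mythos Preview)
Your proof is correct and follows essentially the same approach as the paper. In fact, your lower-bound pair is literally the paper's: the ``near-star'' on $n$ vertices (a star plus one edge between two leaves) is exactly the jellyfish $J_{3,n-3}$ that the paper uses, and the paper's $H=J_{2,n}=K_{1,n+1}$ is your star. For the upper bound, both proofs pebble an internal vertex of a fixed $P_4$ in $G$ and branch on whether Duplicator replies with the centre or a leaf of the star; the only tactical difference is that the paper first disposes of the case where $G$ contains a $K_3$ (so that $x_2\nsim x_4$ can be assumed), whereas you sidestep this by playing $x_3$ and then $x_4$ when Duplicator is at the centre, which works regardless of whether $x_2\sim x_4$. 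That variant is arguably a shade cleaner, but it is the same idea.
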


\begin{proof}
We first show that $D^4_v({P_4})\le3$.
 Indeed, the list of connected graphs not containing $P_4$
consists of $K_1$, $K_3$, and all stars $K_{1,n-1}$. Let $G$ and $H$ be two graphs, each
with at least 4 vertices. Suppose that $G$ contains $P_4$ and $H$ does not.
Note that $H=K_{1,n-1}$.
If $G$ contains a subgraph $K_3$, Spoilers pebbles it and wins because
there is no $K_3$ in $H$. Assume that $G$ has no $K_3$.
Let $a_1a_2a_3a_4$ be a path in $G$. In the first round Spoiler
pebbles $a_2$.
If Duplicator responds with the central vertex of the star $H$,
Spoiler wins by pebbling $a_4$, which is not adjacent to $a_2$
due to the absence of $K_3$ in $G$.
If Duplicator responds with a leaf of $H$,
Spoiler wins by pebbling $a_1$ and $a_3$.

It remains to prove the lower bound. Consider
$G=J_{3,n}$ and $H=J_{2,n}=K_{1,n+1}$.
These graphs have $n+3$ and $n+2$ vertices respectively, and the parameter $n$
can be chosen arbitrarily large.
Note that $G$ contains $P_4$ as a subgraph, while $H$ does not.
As easily seen, $D(G,H)\ge3$.
\end{proof}

\begin{lemma}
$W_v(P_4)=2$.
\end{lemma}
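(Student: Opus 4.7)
The lower bound $W_v(P_4)\ge2$ is immediate: on simple graphs the only atomic formulas involving a single variable $x$ are $x=x$ and $x\sim x$, which are always true and always false respectively, so every one-variable sentence is constant and cannot separate any two graphs on at least two vertices. For the upper bound I plan to exhibit a uniform Spoiler strategy in the two-pebble Ehrenfeucht-Fra{\"\i}ss{\'e} game, winning in a bounded number of rounds on every pair of sufficiently large connected $G\in\subgr{P_4}$ and $H\notin\subgr{P_4}$. The crucial structural fact, already used in the previous lemma, is that such an $H$ is forced to be a star $K_{1,n-1}$, so $H$ has a unique universal vertex (the centre) and every other vertex is a leaf whose only neighbour is the centre.

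The analysis splits into three cases according to the shape of $G$. If $G$ is complete, Spoiler pebbles two distinct leaves of $H$; these are non-adjacent, while no two distinct vertices of $K_n$ are, so Duplicator loses in two rounds. If $G$ has no universal vertex, Spoiler pebbles the centre of $H$ in round~1, forcing Duplicator to some non-universal $x\in G$, which has a non-neighbour $x'\ne x$; Spoiler pebbles $x'$ in round~2 and Duplicator fails because $H$ has no vertex distinct from and non-adjacent to the centre.

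The substantive case is when $G$ has a universal vertex $u$ but is not complete. I first argue that some non-universal vertex $v$ has degree at least two: otherwise every non-universal vertex is a pendant to a universal vertex, and since universal vertices are pairwise adjacent, the graph $G$ collapses to a star, contradicting $G\in\subgr{P_4}$. Fix such a $v$, a second neighbour $w\ne u$ of $v$, and a non-neighbour $w'$ of $v$; the four vertices $u,v,w,w'$ are distinct and form a paw with edges $uv,uw,uw',vw$ and non-edge $vw'$. Spoiler's six-round strategy walks through this paw. In rounds 1 and 2 he pebbles $v$ and $w'$, forcing Duplicator to place both pebbles on distinct leaves of $H$ (since $v\not\sim w'$). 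In rounds 3--5 Spoiler performs the moves pebble~2 from $w'$ to $u$, then pebble~1 from $v$ to $w$, then pebble~2 from $u$ to $v$; each move travels along an edge of $G$ while the other pebble is matched to a leaf, and since a leaf's unique neighbour is the centre, Duplicator's reply is forced to alternate deterministically between the centre and some leaf. After round~5, pebble~1 sits on $w$ matched to a leaf and pebble~2 sits on $v$ matched to the centre. In round~6 Spoiler moves pebble~1 from $w$ to $w'$; the non-edge $vw'$ combined with pebble~2's matching to the centre requires Duplicator to name a vertex of $H$ both distinct from and non-adjacent to the centre, which is impossible.

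The main obstacle is this third case: both the paw extraction and the verification that each of the six moves uniquely forces Duplicator's reply need careful bookkeeping of adjacency and equality constraints. Once this is established, the three cases together give a universal six-round win for Spoiler with two pebbles, and the finiteness of FO$^2$-sentences of bounded quantifier depth (a minor adaptation of Lemma~\ref{lem:DD}(3) in which quantifier depth is held fixed) yields a single defining FO$^2$ sentence, hence $W_v(P_4)\le2$.
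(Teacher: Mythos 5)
Your proof is correct and takes essentially the same route as the paper: both rest on the fact that a sufficiently large connected $P_4$-free graph is a star $K_{1,n-1}$, and then give an explicit bounded-round Spoiler strategy in the 2-pebble game that exploits the forced centre/leaf responses, concluding via the finiteness of inequivalent two-variable sentences of bounded depth. The paper's main-case walk is a bit shorter (Spoiler pebbles the centre of $H$ first and uses the unique universal vertex of $G$, winning in four rounds instead of your six), but this difference is immaterial for the width bound.
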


\begin{proof}
Suppose that $G$ contains a (not necessarily induced) subgraph $P_4$ and $H$ does not.
Let $H$ have more than 3 vertices; then $H=K_{1,n}$.
We have to show that Spoiler wins the 2-pebble game on $G$ and $H$
making a bounded number of moves (that does not depend on how large $G$ and $H$ are).
In the first round he pebbles the central vertex in $H$.
Suppose that $G$ has a universal vertex for else Spoiler wins in the next round.
Not to lose in the next round, Duplicator pebbles a universal vertex $u$ in $G$.
If $G$ has yet another universal vertex, Spoiler pebbles it and wins in the 3rd round
by reusing the first pebble. Assume, therefore, that $u$ is the only universal
vertex in $G$. Since $G$ is not a star graph, it contains a vertex $v\ne u$
having a neighbor $w\ne u$. Spoiler pebbles $v$ in the second round.
Duplicator responds with leaf in $H$. In the third round Spoiler moves the pebble
from $u$ to $w$. Duplicator is forced to respond with pebbling the central vertex in $H$.
Spoiler wins in the fourth round by moving the pebble from $v$ to a vertex non-adjacent
with~$w$.
\end{proof}

\paragraph{The cycle subgraph ($C_4$).}

To prove part 3 of Theorem \ref{thm:four}, we need some properties of random regular graphs,
established by Bollob\'as \cite{Bollobas-b} and  Wormald \cite{Wor99}.
We collect them in the following lemma.

\begin{lemma}\label{lem:Warmald}
Let $d,g\ge3$ be fixed, and $dn$ be even.
Let $\mathcal{G}_{n,d}$ denote a random $d$-regular graph on $n$ vertices.
\begin{enumerate}
\item
$\mathcal{G}_{n,d}$ is $d$-connected with high probability (see \cite[Section 7.6]{Bollobas-b} or \cite[Section~2.6]{Wor99}).
\item
$\mathcal{G}_{n,d}$ has girth $g$ with probability that converges to a limit different from 0 and from 1
(see \cite[Corollary 2.19]{Bollobas-b} or \cite[Theorem~2.5]{Wor99}).
\item
$\mathcal{G}_{n,d}$ has no non-trivial automorphism with high probability (see \cite[Theorem~9.10]{Bollobas-b}).
\end{enumerate}
\end{lemma}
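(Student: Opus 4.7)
The plan is to view all three assertions as classical facts about the uniform random $d$-regular graph and to reduce each of them to Bollob\'as's pairing (configuration) model. Recall that in this model one samples a uniformly random perfect matching on $dn$ points partitioned into $n$ buckets of size $d$, and then contracts each bucket to a vertex; the resulting multigraph is distributed as a random $d$-regular multigraph, and conditioning on simplicity (an event whose probability tends to a positive constant depending only on $d$) yields $\mathcal{G}_{n,d}$. Any event that holds with high probability in the pairing model therefore also holds with high probability in $\mathcal{G}_{n,d}$, and any event whose pairing-model probability tends to a constant in $(0,1)$ retains a limit in $(0,1)$ after conditioning on simplicity.

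For Part 1, I would carry out the standard first-moment argument: for any non-trivial cut $(S,V\setminus S)$ of $\mathcal{G}_{n,d}$ with $|S|\le n/2$, the expected number of pairings crossing the cut is linear in $|S|$ and easily dominates $d$; a union bound over all cuts of total size less than $d$ rules out small separators with high probability. For Part 2, the key fact is that the numbers $X_3,X_4,\dots,X_{g-1}$ of short cycles of each length $k<g$ in the pairing model converge jointly in distribution to independent Poisson variables with parameters $\lambda_k=(d-1)^k/(2k)$. Hence
\[
\prob{\mathrm{girth}\ge g}\longrightarrow\prod_{k=3}^{g-1}e^{-\lambda_k},
\]
which lies strictly between $0$ and $1$, and the desired claim then follows by a suitable inclusion-exclusion identifying the event $\{\mathrm{girth}=g\}$ with $\{\mathrm{girth}\ge g\}\setminus\{\mathrm{girth}\ge g+1\}$ (or by the same Poisson limit applied to the single count $X_g$).

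Part 3 is deeper and is, in principle, the main obstacle, but since the lemma only asserts classical facts I would simply invoke the proof in Bollob\'as~\cite{Bollobas-b}. The essential idea is local rigidity: with high probability neighborhoods of radius $O(\log n)$ around almost every vertex look like a $d$-regular tree, so any automorphism is forced to map each such neighborhood to an isomorphic one; a careful propagation argument, exploiting the sparse long-range structure revealed by the pairing model, then shows that only the identity survives. Because the lemma is merely collecting these statements for later use in the proofs concerning $D_\kappa(C_4)$ and $W_\kappa(C_4)$, the ``proof'' amounts to assembling the three references; no genuinely new argument is required.
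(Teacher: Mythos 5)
Your proposal is correct and matches the paper's treatment: the paper gives no proof of this lemma at all, simply citing Bollob\'as and Wormald for each of the three facts, which is exactly what you do (your sketches of the pairing-model arguments are reasonable but inessential, since the lemma is only a collection of classical results assembled for later use).
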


\begin{lemma}
$D_\kappa(C_4)=4$.
\end{lemma}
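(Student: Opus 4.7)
The plan is to prove the upper bound $D_\kappa(C_4)\le 4$ trivially from $D_\kappa(F)\le D(F)\le v(F)=4$, and then concentrate on the matching lower bound. By Lemma~\ref{lem:DD}(1) it suffices to exhibit, for arbitrarily large $k$, a connected $G$ containing $C_4$ and a connected $H$ not containing $C_4$ with $\kappa(G),\kappa(H)\ge k$ and $D(G,H)\ge 4$; via the pebble-game characterization this last condition is equivalent to Duplicator winning the $3$-round $3$-pebble Ehrenfeucht-Fra{\"\i}ss{\'e} game on $(G,H)$.

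For the construction I would fix $d\ge\max(3,k)$ and take $G=\mathcal{G}_{n,d}$ conditioned on girth exactly $4$ (so $G$ contains a $C_4$) and $H=\mathcal{G}_{n,d}$ conditioned on girth at least $5$ (so $H$ is $C_4$-free). Lemma~\ref{lem:Warmald} then guarantees that $d$-connectedness, the prescribed girth, and absence of non-trivial automorphisms (which in particular rules out twin pairs $u\ne v$ with $N(u)=N(v)$) hold simultaneously with positive probability for all sufficiently large~$n$, yielding concrete $G$ and $H$ with $\kappa(G),\kappa(H)\ge d\ge k$.

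For the $3$-round game, both graphs are triangle-free, so the only $3$-vertex induced subgraphs arising in either graph are $\overline{K_3}$, $K_2\cup K_1$, and $P_3$, all realizable in both. The delicate point is that $G$ has pairs of non-adjacent vertices with two or more common neighbours coming from its $C_4$'s, whereas in $H$ every non-adjacent pair has at most one common neighbour. Duplicator's Round~2 rule is therefore: whenever Spoiler plays $a_1\not\sim a_2$ in $G$, she replies with $b_2\in V(H)$ at distance $2$ from $b_1$ if $N_G(a_1)\cap N_G(a_2)\ne\emptyset$ and at distance at least $3$ otherwise, and the symmetric rule is used if Spoiler plays in $H$. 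Both kinds of vertex exist because a $d$-regular graph of girth $\ge 4$ has at least one vertex at distance exactly $2$ from any given vertex (for $d\ge 2$), and for large $n$ it also has many vertices at distance $\ge 3$.

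Finally, in Round~3 the only delicate adjacency pattern for Spoiler to request is ``adjacent to both previously pebbled vertices''. If the two earlier pebbles are adjacent, any common neighbour would form a triangle, which is ruled out by girth $\ge 4$; if they are non-adjacent, the Round~2 rule guarantees $N_G(a_1)\cap N_G(a_2)\ne\emptyset$ iff $N_H(b_1)\cap N_H(b_2)\ne\emptyset$, so a matching response exists. The other patterns are handled by $d$-regularity together with the twin-free property, which yields $|N(v_1)\setminus N(v_2)|\ge 1$ in $G$ (and $\ge d-1$ in $H$ via girth $\ge 5$). The main obstacle is arranging Duplicator's Round~2 choice so as to match common-neighbour existence across the two graphs; once that is set up, Round~3 is routine.
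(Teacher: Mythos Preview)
Your proposal is correct and follows essentially the same approach as the paper: both take $G$ and $H$ to be $d$-regular graphs of girth exactly~$4$ and at least~$5$ respectively (supplied by Lemma~\ref{lem:Warmald}), use $d$-connectedness for the $\kappa$ requirement, maintain the invariant ``distance~$2$ iff distance~$2$'' after Round~2, and invoke the absence of non-trivial automorphisms in $G$ (hence no twin pair $N(u)=N(v)$) to handle the one asymmetric Round-3 case where Spoiler plays in $H$ a vertex adjacent to exactly one of the earlier pebbles. The paper's write-up singles out this last asymmetric case more explicitly (and only needs automorphism-freeness for $G$, not $H$), but the mathematical content is the same.
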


\begin{proof}
Fix $k$ as large as desired. Lemma \ref{lem:Warmald} provides us with
$k$-regular graphs $G$ and $H$ such that
\begin{enumerate}
\item
$G$ has girth exactly 4, and $H$ has girth strictly more than~4;
\item
both $G$ and $H$ are $k$-connected;
\item
$G$ has no non-trivial automorphism;
\item
both $G$ and $H$ have no less than $k^2+2$ vertices.
\end{enumerate}
It suffices to show that $D(G,H)>3$. To this end, we describe
a strategy allowing Duplicator to win the 3-round game.

As usually, we denote the vertices pebbled in $G$ and $H$ in the $i$-th round by $x_i$ and $y_i$
respectively. We desribe Duplicator's move assuming that Spoiler has moved in $G$;
the other case is symmetric with the only exception that will occur in the end of our analysis.

\smallskip

\textit{1st round}.
Duplicator pebbles an arbitrary vertex $y_1$ in~$H$.

\smallskip

\textit{2nd round}.
Duplicator pebbles $y_2$ such that $d(y_1,y_2)=d(x_1,x_2)$
if $d(x_1,x_2)\le2$ and $d(y_1,y_2)\ge3$ if $d(x_1,x_2)\ge3$.
If $d(x_1,x_2)=2$, such a vertex exists by the assumptions on the girth.
If $d(x_1,x_2)\ge3$, it exists because $H$ has more than $1+k+k(k-1)$ vertices.

\smallskip

\textit{3rd round}.
If $x_3$ is adjacent neither to $x_1$ nor to $x_2$, then Duplicator
pebbles $y_3$ adjacent neither to $y_1$ nor to $y_2$, which exists
because there are more than $2k+2$ vertices in the graph.
Assume, therefore, that $x_3$ is adjacent to at least one of $x_1$ and~$x_2$.

\Case1{$d(x_1,x_2)=1$ or $d(x_1,x_2)\ge3$.}
Duplicator wins by pebbling $y_3$ adjacent either to $y_1$ or to $y_2$ depending on
the neighborhood of $x_3$. Note that $x_3$ cannot be adjacent to both $x_1$ and $x_2$.
If $d(x_1,x_2)=1$, this follows from the assumption on the girth.

\Case2{$d(x_1,x_2)=2$.}
If $x_3$ is adjacent to both $x_1$ and $x_2$, Duplicator wins by pebbling
$y_3$ adjacent to both $y_1$ and $y_2$.
It remains to consider the subcase when $x_3$ is adjacent to exactly one of $x_1$ and $x_2$, say,
to $x_2$. Then Duplicator pebbles an arbitrary vertex $y_3$ adjacent to $y_2$.
Note that $y_3$ and $y_1$ are not adjacent because $H$ has girth larger than 4.
However, this argument does not work when Spoiler plays the 3rd round in $H$.

Thus, the following case takes more care: $d(x_1,x_2)=d(y_1,y_2)=2$ and Spoiler pebbles $y_3$
adjacent to $y_2$ and not adjacent to $y_1$. Fortunately, Duplicator anyway has
a choice of $x_3$ adjacent to $x_2$ and not adjacent to $x_1$. Indeed, if all
neighbots of $x_2$ were adjacent also to $x_1$, then the transposition of $x_2$ and $x_1$
would be an automorphism of $G$, contradicting the assumption.
\end{proof}

\begin{lemma}
$W_\kappa(C_4)\ge3$.
\end{lemma}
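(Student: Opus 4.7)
The plan is to apply Lemma~\ref{lem:DD}(2): it suffices to exhibit, for arbitrarily large $k$, $k$-connected graphs $G\in\mathcal S(C_4)$ and $H\notin\mathcal S(C_4)$ with $W(G,H)\ge 3$. I would simply recycle the witnesses from the preceding lemma. By Lemma~\ref{lem:Warmald}, for every $k\ge 3$ and all sufficiently large $n$ there exist $k$-regular, $k$-connected graphs $G$ of girth exactly $4$ (hence containing $C_4$) and $H$ of girth strictly greater than $4$ (hence $C_4$-free), both on $n$ vertices. These are the same graphs that established $D_\kappa(C_4)=4$; what changes is the bound on the number of pebbles.

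The main task is then to argue $W(G,H)\ge 3$ by showing that Duplicator wins the $d$-round $2$-pebble Ehrenfeucht--Fra\"{\i}ss\'e game for every $d$. The key observation I would exploit is that with only two pebbles on the board, the partial-isomorphism condition Duplicator must preserve after each round involves at most one pair of vertices. She therefore needs only to match a single adjacency/non-adjacency and a single equality/non-equality, rather than any global property that might detect cycles of length $4$.

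Concretely, assume by symmetry that Spoiler places a pebble at $x'\in V(G)$ while the other pebble, if present, occupies $x\in V(G)$ paired with $y\in V(H)$. Duplicator replies with $y$ itself when $x'=x$; with any neighbor of $y$ distinct from $y$ when $x'\sim x$ and $x'\ne x$; and with any non-neighbor of $y$ distinct from $y$ when $x'\not\sim x$ and $x'\ne x$. Each of the three options is available because $H$ is $k$-regular with $k\ge 3$ and we may take $n$ as large as we wish, so $y$ has abundant neighbors and non-neighbors. Spoiler's moves in $H$ are handled symmetrically, and the first round imposes no constraint.

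I do not anticipate a genuine obstacle here: the whole point is that the $2$-pebble game on a pair of regular graphs is blind to the girth, because it can only interrogate local adjacency between two vertices, and that information is automatically matched in both directions. Since $k$ can be chosen arbitrarily large and $\kappa(G),\kappa(H)\ge k$, Lemma~\ref{lem:DD}(2) yields $W_\kappa(C_4)\ge 3$.
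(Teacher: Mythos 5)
Your proposal is correct and follows essentially the same route as the paper: the same witnesses (Lemma~\ref{lem:Warmald} gives $d$-connected $d$-regular graphs of girth $4$ and of girth exceeding $4$), combined via Lemma~\ref{lem:DD}(2). Your explicit two-pebble Duplicator strategy is just a spelled-out version of the paper's remark that $W(G,H)>2$ holds ``readily'' because neither graph has a universal or an isolated vertex.
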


\begin{proof}
Fix an arbitrarily large $d$. Consider a random $d$-regular graph $\mathcal{G}_{n,d}$
with sufficiently large number of vertices $n$. By Lemma \ref{lem:Warmald},
this graph is $d$-connected with high probability. Also, $\mathcal{G}_{n,d}$ has girth 4 with nonzero probability,
and as well it has girth 5 with nonzero probability. This yields us
two $d$-connected graphs $G$ and $H$ of girth 4 and 5 respectively.
Note that both graphs have neither universal nor isolated vertices. 
This readily implies that $W(G,H)>2$.
\end{proof}

The proof of Theorem \ref{thm:four} is complete.

\section{Concluding remarks and questions}\label{s:que}

\subsection{Definitions over 2-connected graphs}

Suppose that a pattern graph $F$ is 2-connected. Note that $G$ contains an $F$ subgraph if and only if
a 2-connected component of $G$ contains such a subgraph.
This motivates a relaxation of our setting to considering definitions of
$\subgr F$ over \emph{2-connected} (rather than \emph{connected}) graphs.
In particular, let $W'_\tw(F)$ denote the minimum $m$ for which there are
a formula $\Phi$ of variable width $m$ and a number $k$
such that $G\models\Phi$ exactly when $G\in\subgr F$, for all
2-connected $G$ of treewidth at least $k$.
Note that \subgr F is recognizable in time $O(n^{W'_\tw(F)})$ and
\begin{equation}
  \label{eq:W'}
W_\kappa(F)\le W'_\tw(F)\le W_\tw(F)=v(F).
\end{equation}

Consider, for example, $F=C_4$. Since $C_4$ is 2-connected, we know that $W_\tw(C_4)=4$,
but do not know whether or not $W_\kappa(C_4)=4$.
With some extra effort, we can determine the value of~$W'_\tw(C_4)$.

\begin{proposition}
$W'_\tw(C_4)=4$.
\end{proposition}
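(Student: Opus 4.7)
The upper bound $W'_\tw(C_4)\le 4$ is immediate from the chain \refeq{W'}, using that $W_\tw(C_4)=4$ follows from part 1 of Theorem \ref{thm:pendant-path-tw} (since $C_4$ is its own core). For the matching lower bound $W'_\tw(C_4)\ge 4$, by an obvious adaptation of Lemma \ref{lem:DD}(2) to the 2-connected setting, it suffices to exhibit, for arbitrarily large $k$, a pair of 2-connected graphs $G,H$ of treewidth at least $k$ with $G\in\subgr{C_4}$, $H\notin\subgr{C_4}$, and $W(G,H)>3$. The plan is to take $G$ and $H$ as $d$-regular graphs for $d$ sufficiently large in terms of $k$: by parts 1 and 2 of Lemma \ref{lem:Warmald}, random $d$-regular graphs with prescribed girth are $d$-connected with positive probability, so we can find such a $G$ of girth exactly $4$ (hence containing $C_4$) and such an $H$ of girth at least $5$ (hence $C_4$-free). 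Both are 2-connected and, since $d$-connected random regular graphs are good expanders, both have treewidth linear in their number of vertices, and in particular at least $k$ when $n$ is chosen large.

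The technical content is to show that $W(G,H)>3$. We describe a winning strategy for Duplicator in the 3-pebble Ehrenfeucht-Fra{\"\i}ss{\'e} game of unbounded duration, which strengthens the argument used to prove $W_\kappa(C_4)\ge 3$. Both graphs are triangle-free, so the induced subgraph on any three pebbled vertices has at most $2$ edges, and only three isomorphism types can arise: $3K_1$, $K_2\cup K_1$, and $P_3$. Duplicator maintains the invariant that after each round the pebble pairing is a partial isomorphism, and moreover that the corresponding pebbles in $G$ and $H$ sit in local neighborhoods of ``the same depth-$r$ type'' for some $r$ chosen large compared to the round counter but small compared to the girth of $H$. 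Since both graphs are $d$-regular with $d$ large, and since the relevant portion of $H$ around any pebbled vertex looks like a tree (by the girth condition), any candidate vertex that Spoiler forces by adjacencies to the (at most two) other pebbles admits a corresponding choice in the other graph.

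The main obstacle, and the source of the ``extra effort'', is the handling of the $P_3$ configuration, which is exactly where $G$ and $H$ begin to differ structurally: three pebbles on a $P_3$ in $G$ might lie on a $C_4$ (the endpoints sharing a further common neighbor), while in $H$ they cannot. However, bearing witness to a complete $C_4$ requires a fourth vertex; with only three pebbles, any attempt by Spoiler to move a pebble onto a second common neighbor of the $P_3$-endpoints in $G$ forces him to vacate one of the current pebbles, and Duplicator responds by placing her corresponding pebble onto the (unique, if any) common neighbor in $H$, or else by restoring it to an equivalent location of the correct adjacency type. A routine case analysis over the three admissible induced types, combined with $d$-regularity and the large-girth tree-likeness of $H$, verifies that Duplicator never gets stuck, giving $W(G,H)>3$ and hence $W'_\tw(C_4)\ge 4$.
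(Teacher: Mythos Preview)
Your upper bound is correct and matches the paper. The gap is in the lower bound.

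To show $W'_\tw(C_4)\ge 4$ you need Duplicator to win the $3$-pebble game on $G$ and $H$ for \emph{arbitrarily many} rounds. Your proposed strategy, however, is a Hanf-locality argument: you speak of maintaining ``the same depth-$r$ type'' with $r$ ``large compared to the round counter''. That kind of argument bounds $D(G,H)$, not $W(G,H)$; in the unbounded-round $3$-pebble game there is no round budget to outlast. The promised ``routine case analysis'' is also not routine. Preserving only the adjacency pattern of the three pebbles is insufficient, since Spoiler can pebble a common neighbour and thereby force Duplicator to have matched the predicate ``distance $\le 2$'' as well; but once you try to maintain a coarse-distance invariant, existence of the required response in a \emph{random} $d$-regular graph of girth $\ge 5$ is far from automatic (for instance, with $d_H(y_1,y_2)=2$, a neighbour of $y_1$ at distance exactly $2$ from $y_2$ need not exist). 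Nothing in your sketch rules out Spoiler exploiting such local irregularities over many rounds, and random regular graphs are rigid rather than homogeneous.

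The paper proceeds quite differently, and the difference is exactly what makes the Duplicator strategy go through. It first proves $W((K_2)^3,C_6)>3$ directly: both graphs are antipodal of diameter $3$, and for any two pebbles at equal distances in the two graphs, every admissible distance pair for a third vertex is simultaneously realizable on both sides; so Duplicator can preserve \emph{exact} pairwise distances forever. Then, to obtain $2$-connected graphs of arbitrarily large treewidth, the paper takes a $2$-connected graph $B$ of large treewidth, subdivides each edge, and replaces each middle edge by a copy of $(K_2)^3$ (for $G$) or $C_6$ (for $H$), glued along an antipodal pair. Duplicator mirrors moves in the subdivided $B$ identically and plays the distance-preserving strategy inside corresponding gadgets. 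The structured, antipodal gadgets are essential; your unstructured random regular graphs do not supply a comparable invariant.
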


\begin{proof}
  We first exhibit a graph $G$ containing a $C_4$
and a $C_4$-free graph $H$ such that $W(G,H)>3$. Let $G$ be the cube graph,
i.e., $G=(K_2)^3$ (the Cartesian power), and $H=C_6$.
Duplicator wins the $3$-pebble game on these graphs whatever the number of rounds, that is,
\begin{equation}
  \label{eq:cube-C6}
W((K_2)^3,C_6)>3.
\end{equation}
In order to see this, note that both graphs have diameter 3. A winning strategy for Duplicator
is based on the following observation. Let $x_1,x_2\in V((K_2)^3)$ and $y_1,y_2\in V(C_6)$.
Suppose that $d(x_1,x_2)=d(y_1,y_2)$, where $d(u,v)$ denotes the distance between two
vertices. Then for every $x\in V((K_2)^3)$ there is a $y\in V(C_6)$, and vice versa,
such that $d(x,x_1)=d(y,y_1)$ and $d(x,x_2)=d(y,y_2)$. This allows Duplicator
to keep the distances between the same pair of pebbles in $(K_2)^3$ and $C_6$ always equal.\footnote{%
In \cite{induced}, we show that that a similar strategy can be used on any pair of
\emph{distance-regular} graphs satisfying certain similarity conditions.}

We now show that there are many other pairs of 2-connected graphs
$G\in\subgr{C_4}$ and $G\notin\subgr{C_4}$ with $W(G,H)>3$ and,
moreover, they can have arbitrarily large treewidth.
At this point, it is useful to note that both $(K_2)^3$ and $C_6$
are antipodal in the sense \cite{BermanK88}.
A connected graph is defined in \cite{BermanK88} to be \emph{antipodal}
if for every vertex $v$ there is a unique vertex $\bar v$ of maximum distance from $v$.

Let $A$ be an antipodal graph with designated pair of antipodal vertices $a$ and $\bar a$.
Let $B$ be an arbitrary graph. We construct the product graph $\dobut BA$ as follows:
\begin{itemize}
\item
Subdivide each edge of $B$ by two new vertices $b$ and $b'$.
\item
Replace each edge $bb'$ with a copy of $A$ identifying $b$ with $a$ and $b'$ with~$\bar a$.
\end{itemize}
Though the second step does not seem to be symmetric with respect to $b$ and $b'$,
this construction is actually isomorphism-invariant under interchanging $b$ and $b'$.
This follows from the known fact (see \cite{BermanK88}) that the map taking each vertex $v$
to its antipose $\bar v$ is an automorphism of an antipodal graph.
In fact, we will apply this construction only to factor graphs $A=(K_2)^3$ and $A=C_6$,
where we have isomorphism invariance not only with respect to swapping $b$ and $b'$
but even with respect to the choice of the antipodal pair $a,\bar a$;
the latter because both $(K_2)^3$ and $C_6$ are distance transitive.

Fix $k$ as large as desired.
Take a 2-connected graph $B$ of treewidth at least $k$ and consider
graphs $G=\dobut B{(K_2)^3}$ and $G=\dobut B{C_6}$.
These graphs have the following properties:
\begin{itemize}
\item
Both $G$ and $H$ are 2-connected;
\item
$\tw(G)\ge\tw(B)\ge k$ and $\tw(H)\ge\tw(B)\ge k$
(because $B$ is a minor of both $G$ and $H$, see  \cite{Diestel});
\item
$G$ has girth 4, and $H$ has girth 6.
\end{itemize}
It remains to notice that $W(G,H)>3$. Duplicator wins the 3-pebble game
using the following strategy:
\begin{itemize}
\item
Whenever Spoiler pebbles a vertex of the subdivided $B$ in $G$ or $H$,
Duplicator responds with the same vertex of the subdivided $B$ in the other graph;
\item
Whenever Spoiler moves in a copy of $(K_2)^3$ in $G$, Duplicator responds
by playing in the corresponding copy of $C_6$ in $H$ (and vice versa)
using her winning strategy in the game on $(K_2)^3$ and $C_6$ that
was described above in the the proof of the inequality~\refeq{cube-C6}.
\end{itemize}
Note that the two rules above are consistent because Duplicator's
strategy in the the proof of the inequality~\refeq{cube-C6} respects
the antipodality relation.
\end{proof}

It would be instructive to see an example of a 2-connected pattern graph $F$
such that $W'_\tw(F)<v(F)$.

\subsection{Further questions}
\mbox{}

\que
The first inequality in \refeq{W'} motivates an interest in lower bounds on $W_\kappa(F)$
for 2-connected pattern graphs $F$. For instance, one can show that $D_\kappa(C_\ell)\ge\log_3\ell$.
On the other hand, we currently cannot disprove even that $W_\kappa(C_\ell)=O(1)$.
For small graphs, we do not know whether or not $W_\kappa(C_4)=4$
and $W_\kappa(K_4\setminus e)=4$; cf.\ Table~\ref{fig:small}.

\que
Is the bound $D_v(F)>v(F)/2$, given by Theorem \ref{thm:ell/2}, tight?
On the other hand, currently we cannot disprove even that $D_v(F)\ge v(F)-O(1)$.

\que
It is known \cite{ChandranS05} that $\tw(F)\ge e(F)/v(F)$.
Can one improve Theorem \ref{thm:e/v} to $D_\kappa(F)\ge\tw(F)$?

\que
The parameters $D_\logic(\classc)$ and $W_\logic(\classc)$
have been studied in various contexts also for other graph properties $\classc$
and other logics $\logic$.
We refer an interested reader to~\cite{Dawar98,Turan84}.
In many cases, it would be interesting to compare $D_\logic(\classc)$ and $D_{\logic'}(\classc)$
(or $W_\logic(\classc)$ and $W_{\logic'}(\classc)$) for various different logics $\logic$
and $\logic'$ and the same property~$\classc$.

\subparagraph*{Acknowledgements.}

We would like to thank Tobias Müller for his kind hospitality during
the Workshop on Logic and Random Graphs in the Lorentz Center
(August 31 -- September 4, 2015), where this work was originated.


\begin{thebibliography}{10}

\bibitem{AlonS16}
N.~Alon and J.~H. Spencer.
\newblock {\em The probabilistic method}.
\newblock John Wiley \& Sons, 2016.

\bibitem{AlonYZ95}
N.~Alon, R.~Yuster, and U.~Zwick.
\newblock Color-coding.
\newblock {\em J. {ACM}}, 42(4):844--856, 1995.

\bibitem{Amano10}
K.~Amano.
\newblock {$k$-Subgraph} isomorphism on {\ac0} circuits.
\newblock {\em Computational Complexity}, 19(2):183--210, 2010.

\bibitem{BermanK88}
A.~Berman and A.~Kotzig.
\newblock Cross-cloning and antipodal graphs.
\newblock {\em Discrete Math.}, 69(2):107--114, 1988.

\bibitem{Bodlaender96}
H.~L. Bodlaender.
\newblock A linear-time algorithm for finding tree-decompositions of small
  treewidth.
\newblock {\em {SIAM} J. Comput.}, 25(6):1305--1317, 1996.

\bibitem{Bollobas-b}
B.~Bollob\'as.
\newblock {\em Random graphs}, volume~73 of {\em Cambridge Studies in Advanced
  Mathematics}.
\newblock Cambridge University Press, Cambridge, second edition, 2001.

\bibitem{ChandranS05}
L.~S. Chandran and C.~R. Subramanian.
\newblock Girth and treewidth.
\newblock {\em J. Comb. Theory, Ser. {B}}, 93(1):23--32, 2005.

\bibitem{ChekuriC14}
C.~Chekuri and J.~Chuzhoy.
\newblock Polynomial bounds for the grid-minor theorem.
\newblock In {\em Proc. of the 46th ACM Symposium on Theory of Computing
  (STOC'14)}, pages 60--69, 2014.

\bibitem{ChenHKX06}
J.~Chen, X.~Huang, I.~A. Kanj, and G.~Xia.
\newblock Strong computational lower bounds via parameterized complexity.
\newblock {\em J. Comput. Syst. Sci.}, 72(8):1346--1367, 2006.

\bibitem{Courcelle90}
B.~Courcelle.
\newblock The monadic second-order logic of graphs {I}. {R}ecognizable sets of
  finite graphs.
\newblock {\em Inf. Comput.}, 85(1):12--75, 1990.

\bibitem{Dawar98}
A.~Dawar.
\newblock A restricted second order logic for finite structures.
\newblock {\em Inf. Comput.}, 143(2):154--174, 1998.

\bibitem{Diestel}
R.~Diestel.
\newblock {\em Graph theory}.
\newblock New York, NY: Springer, 2000.

\bibitem{FloderusKLL15}
P.~Floderus, M.~Kowaluk, A.~Lingas, and E.~Lundell.
\newblock Induced {S}ubgraph {I}somorphism: Are some patterns substantially
  easier than others?
\newblock {\em Theoretical Computer Science}, 605:119--128, 2015.

\bibitem{Gall14}
F.~L. Gall.
\newblock Powers of tensors and fast matrix multiplication.
\newblock In {\em Proc. of the Int. Symposium on Symbolic and Algebraic
  Computation (ISSAC'14)}, pages 296--303. {ACM}, 2014.

\bibitem{GraedelG15}
E.~Gr{\"{a}}del and M.~Grohe.
\newblock Is polynomial time choiceless?
\newblock In {\em Fields of Logic and Computation {II}, Essays Dedicated to
  Yuri Gurevich on the Occasion of His 75th Birthday}, volume 9300 of {\em
  Lecture Notes in Computer Science}, pages 193--209. Springer, 2015.

\bibitem{Immerman-book}
N.~Immerman.
\newblock {\em Descriptive complexity}.
\newblock New York, NY: Springer, 1999.

\bibitem{JLR-book}
S.~{Janson}, T.~{{\L}uczak}, and A.~{Ruci\'nski}.
\newblock {\em Random graphs}.
\newblock New York, Berlin: Wiley, 2000.

\bibitem{KawarabayashiR16}
K.~Kawarabayashi and B.~Rossman.
\newblock An excluded-minor approximation of tree-depth.
\newblock Manuscript, 2016.

\bibitem{KouckyLPT06}
M.~Kouck{\'{y}}, C.~Lautemann, S.~Poloczek, and D.~Th{\'{e}}rien.
\newblock Circuit lower bounds via {Ehrenfeucht-Fra{\"\i}ss{\'e}} games.
\newblock In {\em Proc. of the 21st Ann. IEEE Conf. on Computational Complexity
  (CCC'06)}, pages 190--201, 2006.

\bibitem{LiRR14}
Y.~Li, A.~A. Razborov, and B.~Rossman.
\newblock On the {AC$^0$} complexity of {S}ubgraph {I}somorphism.
\newblock {\em SIAM J. Comput.}, 46(3):936--971, 2017.

\bibitem{Libkin04}
L.~Libkin.
\newblock {\em Elements of Finite Model Theory}.
\newblock Texts in Theoretical Computer Science. An {EATCS} Series. Springer,
  2004.

\bibitem{NesetrilP85}
J.~Ne\v{s}et\v{r}il and S.~Poljak.
\newblock On the complexity of the subgraph problem.
\newblock {\em Commentat. Math. Univ. Carol.}, 26:415--419, 1985.

\bibitem{Olariu88}
S.~Olariu.
\newblock Paw-free graphs.
\newblock {\em Inf. Process. Lett.}, 28:53--54, 1988.

\bibitem{PikhurkoV11}
O.~Pikhurko and O.~Verbitsky.
\newblock Logical complexity of graphs: a survey.
\newblock In M.~Grohe and J.~Makowsky, editors, {\em Model theoretic methods in
  finite combinatorics}, volume 558 of {\em Contemporary Mathematics}, pages
  129--179. American Mathematical Society (AMS), Providence, RI, 2011.

\bibitem{Pinsker73}
M.~S. Pinsker.
\newblock On the complexity of a concentrator.
\newblock In {\em Annual 7th International Teletraffic Conference}, pages
  318/1--318/4, 1973.

\bibitem{Rossman08}
B.~Rossman.
\newblock On the constant-depth complexity of $k$-clique.
\newblock In {\em Proc. of the 40th Ann. {ACM} Symposium on Theory of Computing
  (STOC'08)}, pages 721--730. {ACM}, 2008.

\bibitem{Rossman16}
B.~Rossman.
\newblock An improved homomorphism preservation theorem from lower bounds in
  circuit complexity.
\newblock {\em {SIGLOG} News}, 3(4):33--46, 2016.

\bibitem{Rossman-talk}
B.~Rossman.
\newblock Lower bounds for {S}ubgraph {I}somorphism and consequences in
  first-order logic.
\newblock Talk in the Workshop on \emph{Symmetry, Logic, Computation} at the
  Simons Institute, Berkeley, November 2016.
\newblock \url{https://simons.berkeley.edu/talks/benjamin-rossman-11-08-2016}.

\bibitem{Schweikardt13}
N.~Schweikardt.
\newblock A short tutorial on order-invariant first-order logic.
\newblock In {\em Proc. of the 8th Int. Computer Science Symposium in Russia
  (CSR'13)}, volume 7913 of {\em Lecture Notes in Computer Science}, pages
  112--126. Springer, 2013.

\bibitem{Turan84}
G.~Tur{\'{a}}n.
\newblock On the definability of properties of finite graphs.
\newblock {\em Discrete Mathematics}, 49(3):291--302, 1984.

\bibitem{VZh16}
O.~Verbitsky and M.~Zhukovskii.
\newblock The descriptive complexity of {S}ubgraph {I}somorphism without
  numerics.
\newblock In P.~Weil, editor, {\em Proc. of the 12th Int. Computer Science
  Symposium in Russia (CSR'17)}, volume 10304 of {\em Lecture Notes in Computer
  Science}, pages 308--322. Springer, 2017.

\bibitem{induced}
O.~Verbitsky and M.~Zhukovskii.
\newblock On the first-order complexity of {I}nduced {S}ubgraph {I}somorphism.
\newblock In {\em Proc.\ of the 26th EACSL Annual Conference on Computer
  Science Logic (CSL 2017)}, volume~82 of {\em LIPIcs–Leibniz International
  Proceedings in Informatics}, pages 40:1--40:16, 2017.

\bibitem{Wor99}
N.~Wormald.
\newblock Models of random regular graphs.
\newblock In {\em Surveys in Combinatorics}, pages 239--298. Cambridge
  University Press, 1999.

\bibitem{Zhuk}
M.~Zhukovskii.
\newblock Zero-one $k$-law.
\newblock {\em {Discrete Mathematics}}, 312:1670--1688, 2012.

\end{thebibliography}
\end{document}